  \providecommand\BibTeX{{%
    \normalfont B\kern-0.5em{\scshape i\kern-0.25em b}\kern-0.8em\TeX}}}
\begin{document}

\title{Minimizing Congestion for Balanced Dominators}

\author{Yosuke Mizutani}
\email{yos@cs.utah.edu}
\orcid{0000-0002-9847-4890}
\affiliation{%
  \institution{University of Utah}
  \streetaddress{P.O. Box 1212}
  \city{Salt Lake City}
  \state{Utah}
  \country{USA}
}

\author{Annie Staker}
\email{annie.staker@utah.edu}
\orcid{0000-0002-2106-4068}
\affiliation{%
  \institution{University of Utah}
  \streetaddress{P.O. Box 1212}
  \city{Salt Lake City}
  \state{Utah}
  \country{USA}
}

\author{Blair D. Sullivan}
\email{sullivan@cs.utah.edu}
\orcid{0000-0001-7720-6208}
\affiliation{%
  \institution{University of Utah}
  \streetaddress{P.O. Box 1212}
  \city{Salt Lake City}
  \state{Utah}
  \country{USA}
}


\begin{abstract}
  A primary challenge in metagenomics is reconstructing individual
microbial genomes from the mixture of short fragments created by sequencing.
Recent work leverages the sparsity of the assembly graph to find
$r$-dominating sets which enable rapid approximate queries
through a dominator-centric graph partition.
In this paper, we consider two problems related to reducing uncertainty and
improving scalability in this setting.

First, we observe that nodes with multiple closest dominators necessitate arbitrary
tie-breaking in the existing pipeline. As such, we propose finding \textit{sparse}
dominating sets which minimize this effect via a new \textit{congestion} parameter.
We prove minimizing congestion is NP-hard, and give an $\mathcal{O}(\sqrt{\Delta^r})$ approximation algorithm, where $\Delta$ is the max degree.

To improve scalability, the graph should be partitioned into uniformly sized pieces, subject to placing vertices with a closest dominator. This
leads to \textit{balanced neighborhood partitioning}: given an $r$-dominating set, find a partition into connected subgraphs with optimal uniformity so that each vertex is co-assigned with some closest dominator. Using variance of piece sizes to measure uniformity,
we show this problem is NP-hard iff $r$ is greater than $1$. We design and analyze
several algorithms, including a polynomial-time approach which is exact when $r=1$
(and heuristic otherwise).\looseness-1

We complement our theoretical results with computational experiments
 on a corpus of real-world networks showing sparse dominating sets lead to
 more balanced neighborhood partitionings. Further, on the metagenome \scalebox{.85}{\textsf{HuSB1}},
 our approach maintains high query containment and similarity
 while reducing piece size variance.

\end{abstract}

\begin{CCSXML}
<ccs2012>
<concept>
<concept_id>10003752.10003809.10003635</concept_id>
<concept_desc>Theory of computation~Graph algorithms analysis</concept_desc>
<concept_significance>500</concept_significance>
</concept>
</ccs2012>
\end{CCSXML}

\ccsdesc[500]{Theory of computation~Graph algorithms analysis}

\keywords{
  dominating sets,
  congestion,
  graph partitioning,
  metagenomics
}


\maketitle

\section{Introduction}

Microbial communities play a critical role in many aspects of human health
 (e.g. gut microbiomes) and ecosystems (e.g. marine ecology),
 and understanding their composition and function has been increasingly important in
 biological and medical research.
Much of the work on these communities focuses on analyzing the genomic material (DNA and RNA)
 of the constituent microorganisms, a research area called metagenomics.
A primary challenge in the field is reconstructing individual genomes
 from the mixture of short fragments created by shotgun sequencing~\cite{quince2017shotgun}.

One practical approach that has gathered significant recent attention utilizes
 a metagenome assembly graph to guide analyses.
Commonly, this is done with a (compact) De Bruijn graph, or (c)DBG,
 where vertices correspond to DNA subsequences called $k$-mers and
 edges indicate potential compatibility in an assembly (almost complete overlap).
Since the graphs corresponding to real-world metagenomic datasets may have
 tens of millions of vertices, scalable methods for analysis are imperative.
 Recent work of Brown et al.~\cite{Brown2020}, implemented in
 \sgc \footnote{https://github.com/spacegraphcats/spacegraphcats},
 leveraged the sparsity of these graphs to enable efficient indexing and querying
 using partial information about suspected constituent microbes.
Their approach relies on finding an $r$-dominating set
 (using Dvorak's approximation algorithm for sparse graphs~\cite{dvorak2013constantfactor}),
 then partitioning the assembly graph into bounded-radius \emph{pieces} by assigning each vertex to
 one of its closest dominators.
The process is repeated on the piece graph to form a hierarchy of dominating sets
 which enables effective navigation and categorization of the data.
Initial experiments demonstrated the approach can improve the completeness of
 partial genomes for microbes present in a community
 and also reveal significant strain variation in real-world microbiomes \cite{Brown2020}.\looseness-1

Despite these promising results, several key challenges remain.
Here, we focus on those related to partitioning the metagenome assembly graph
 into pieces around the dominators.
In particular, the current dominating set algorithm in~\cite{Brown2020}
may choose dominators that are close together in the cDBG,
 leading to uncertainty in how a region should be ``carved up'' into pieces. Thus, the resulting piece graph may reflect the tie-breaking rules more than underlying ground-truth associations intrinsic to the cDBG/network.
Further, the sizes of the pieces may be imbalanced,
 counteracting any advantage gained by using the hierarchy to prune away irrelevant regions of the cDBG.

In this work, we tackle these challenges by (a) introducing a notion of
 \textit{sparse dominating sets} which rewards ``scattering'' dominators,
 with the aim of generating pieces which are biologically meaningful and
 inherently more stable,
 and (b) considering algorithms for \textit{balanced neighborhood partitioning}:
 assigning vertices to dominators which minimize variation in the resulting piece sizes.
 \looseness-1

While we defer formal definitions to Section~\ref{sec:sparse-domset},
 our approach to avoiding nodes with multiple closest dominators is
 based on minimizing \textit{congestion}, which measures the average number
of dominators appearing in an arbitrary vertex neighborhood.
We show that \PrbMCDS is NP-hard and present an $\mathcal{O}(\sqrt{\Delta^r})$-approximation algorithm running in $\bigo{\Delta^{2r} n \log n}$ time, where $\Delta$ denotes the maximum degree.
We compare this with the $\mathcal{O}(r\log{\Delta})$ standard approximation algorithm
 for finding a (smallest) \PrbRDom. We note that sparse dominating sets have no explicit
 size restriction, and discuss trade-offs between solution size and congestion.

Once we have an $r$-dominating set, the problem becomes one of partitioning
 the vertices into pieces so that (a) each vertex is assigned to a piece containing
 one of its closest dominators,
 and (b) the pieces are as equal in size as possible.
For the latter condition, we minimize the variance of the piece size distribution
 in \PrbBNP.
We show this is polynomial-time solvable when $r=1$ and NP-hard when $r \geq 2$,
 even when there are only two dominators.
Despite this, \AlgPrtBranch establishes the problem is fixed parameter tractable%
\footnote{%
  A problem is called fixed parameter tractable (FPT) with respect to parameter $k\in \N$
  if it can be solved in time $f(k)n^{O(1)}$ for some computable function $f$.
} (FPT) in graphs of bounded degree when parameterized by the number of vertices equidistant
 from multiple dominators.
Further, when $r=1$, we give an exact $\bigo{n^4}$ algorithm \AlgPrtLayer in general graphs
 using flow-based techniques; when $r \geq 2$, this yields a heuristic.
Finally, we compare with a linear-time greedy heuristic, \AlgPrtWeight.
These algorithms are described in Section~\ref{sec:nbr-prt}.

We implemented all of the above algorithms using C++ in an open-source repository
 and tested their performance on a large corpus of real-world networks,
 including a variety of metagenome graphs.
Experimental results demonstrate that the choice of dominating set can significantly impact
 the runtime and solution quality of balanced neighborhood partitioning algorithms,
 with sparse sets out-performing their smaller but more congested analogues.
Finally, we present preliminary results indicating that low-congestion dominating sets
 do not significantly degrade the fidelity of queries using partial genome bins on
  \scalebox{.85}{\textsf{HuSB1}}, a metagenome analyzed in~\cite{Brown2020},
 a critical requirement for their downstream adoption.

\section{Preliminaries}

\subsection{Notation \& Terminology}

Given a graph $G=(V,E)$, we write $n=\abs{G}=\abs{V}$ for the number of vertices and
$m=\norm{G}=\abs{E}$ for the number of edges. The \textit{distance} between two vertices
$x,y\in V$, denoted $d_G(x,y)$, equals the minimum number of edges in an $x,y$-path in
$G$; if no such path exists, we set $d(x,y) := \infty$. The distance between a vertex
$v \in V$ and vertex set $X \subseteq V$ is defined as
$d_G(v,X) := \min_{x\in X}d_G(v,x)$. We use $N(v)$ and $N[v]$ to denote the open and closed
neighborhoods of a vertex $v$, respectively.
We write $N^r[v]$ for the $r$-neighborhood: the set of all vertices at
distance at most $r$ from $v$. For a vertex set $X \subseteq V$, $N^r[X]$ denotes the union
of $N^r[x]$ for all $x \in X$. The $r$-th power of $G$ is defined as
$G^r := \left(V,\left\{uv \mid u,v \in V,\ d_G(u,v) \leq r\right\}\right)$.

We write $\deg_G(v)=\deg(v)$ for the degree of a vertex $v$,
 and $\delta$ and $\Delta$ for the minimum and maximum degree of $G$, respectively.
A graph is called \textit{regular} if $\delta=\Delta$.
We denote the induced subgraph of $G$ on a set $X \subseteq V$ by $G[X]$.
For an edge $e \in E$, we write $G/e$ for the graph formed from $G$ by contracting the edge
$e$; for a vertex set $X \subseteq V$, $G/X$ denotes the graph obtained by contracting all edges in
$G[X]$.\looseness-1

\subsection{Dominating Sets \& Related Problems}

For a graph $G=(V,E)$, a vertex set $D \subseteq V$ is called a \textit{dominating set}
if $N[D]=V$.
An \textit{$r$-dominating set} generalizes this notion and is defined as a set $D \subseteq V$
such that $N^r[D]=V$.
The problems asking for such sets of minimum cardinality are
called \PrbMDS (\PrbMDSShort) and \PrbMRDS, respectively, and are NP-complete,
even for regular graphs of degree 4 \cite{garey1979computers}.
An \textit{$r$-perfect code} is an $r$-dominating set $D$ such that every vertex $v \in V$
satisfies $\abs{N^r[v] \cap D} = 1$.
It is NP-complete to determine whether a graph has a perfect code \cite{KRATOCHVIL1994191,marilynn1997}.

The problem \PrbMRDS is further generalized by \PrbMWDS (\PrbMWDSShort), where each vertex has non-negative weight,
and one seeks an $r$-dominating set of minimum total weight.
Although \PrbMDSShort and \PrbMWDSShort cannot be approximated within a factor $c \log \abs{V}$
for some constant $c>0$ in polynomial time (unless P=NP) \cite{raz1997sub, alon2006algorithmic},
it is known that \PrbMWDSShort has a polynomial-time approximation scheme (PTAS)
in growth-bounded graphs with bounded degree constraint \cite{wang2012ptas}.

\section{Sparse Dominating Sets}
\label{sec:sparse-domset}

In this section, we formulate the problem of finding \textit{sparse dominating sets},
 establish its hardness, and describe several heuristics along with
 an integer linear programming (ILP) formulation.

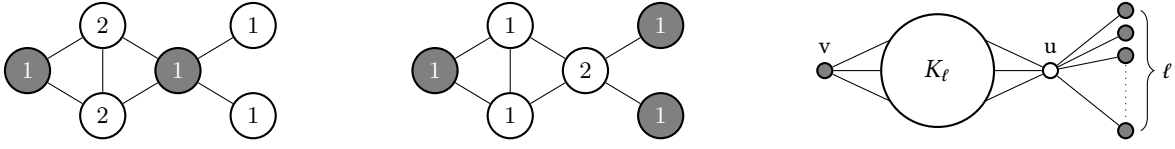
\begin{figure*}[ht]
    \centering

    \pgfdeclarelayer{bg}
    \pgfsetlayers{bg, main}

    \tikzstyle{bigblacknode} = [circle, fill=gray, text=white, draw, thick, scale=1, minimum size=0.6cm, inner sep=1.5pt]
    \tikzstyle{bigwhitenode} = [circle, fill=white, text=black, draw, thick, scale=1, minimum size=0.6cm, inner sep=1.5pt]

    \tikzstyle{blacknode} = [circle, fill=gray, draw, thick, scale=1, minimum size=0.2cm, inner sep=1.5pt]
    \tikzstyle{whitenode} = [circle, fill=white, draw, thick, scale=1, minimum size=0.2cm, inner sep=1.5pt]

    \tikzstyle{hugewhitenode} = [circle, fill=white, text=black, draw, thick, scale=1, minimum size=1.5cm, inner sep=1.5pt, font=\large]

    \begin{minipage}[m]{.30\linewidth}
        \vspace{0pt}
        \centering
        \begin{tikzpicture}
            \node[bigblacknode] (x1) at (0, 0) {\textbf{$1$}};
            \node[bigwhitenode] (x2) at (1, 0.6) {$2$};
            \node[bigwhitenode] (x3) at (1, -0.6) {$2$};
            \node[bigblacknode] (x4) at (2, 0) {\textbf{$1$}};
            \node[bigwhitenode] (x5) at (3, 0.6) {$1$};
            \node[bigwhitenode] (x6) at (3, -0.6) {$1$};

            \draw (x1) -- (x2);
            \draw (x1) -- (x3);
            \draw (x2) -- (x3);
            \draw (x2) -- (x4);
            \draw (x3) -- (x4);
            \draw (x4) -- (x5);
            \draw (x4) -- (x6);
        \end{tikzpicture}
    \end{minipage}
    \begin{minipage}[m]{.30\linewidth}
        \vspace{0pt}
        \centering
        \begin{tikzpicture}
            \node[bigblacknode] (x1) at (0, 0) {\textbf{$1$}};
            \node[bigwhitenode] (x2) at (1, 0.6) {$1$};
            \node[bigwhitenode] (x3) at (1, -0.6) {$1$};
            \node[bigwhitenode] (x4) at (2, 0) {$2$};
            \node[bigblacknode] (x5) at (3, 0.6) {\textbf{$1$}};
            \node[bigblacknode] (x6) at (3, -0.6) {\textbf{$1$}};

            \draw (x1) -- (x2);
            \draw (x1) -- (x3);
            \draw (x2) -- (x3);
            \draw (x2) -- (x4);
            \draw (x3) -- (x4);
            \draw (x4) -- (x5);
            \draw (x4) -- (x6);
        \end{tikzpicture}
    \end{minipage}
    \begin{minipage}[m]{.36\linewidth}
        \vspace{0pt}
        \centering
        \begin{tikzpicture}
            \node[blacknode] (v) at (0, 0) [label=v] {};
            \node[whitenode] (u) at (3, 0) [label=u] {};
            \node[hugewhitenode] (k) at (1.5, 0) {$K_\ell$};
            \node[blacknode] (y0) at (4, 0.8) {};
            \node[blacknode] (y1) at (4, 0.5) {};
            \node[blacknode] (y2) at (4, 0.2) {};
            \node[blacknode] (y3) at (4, -0.8) {};

            \begin{pgfonlayer}{bg}
                \draw (u) -- (y0);
                \draw (u) -- (y1);
                \draw (u) -- (y2);
                \draw (u) -- (y3);
                \draw (v) -- (k) -- (u);
                \draw (v) -- (1.5,0.7) -- (u);
                \draw (v) -- (1.5,-0.7) -- (u);
                \draw[dotted] (4,-0.7) -- (4,0.1);

                \draw [decorate,decoration={brace,amplitude=5pt,mirror}]
  (4.2,-0.8) -- (4.2, 0.8) node[midway,xshift=10pt]{$\ell$};
            \end{pgfonlayer}
        \end{tikzpicture}
    \end{minipage}
    \caption{Examples of the difference between solutions to \PrbMDS and \PrbMCDSShort. Dominators are shaded in gray, and vertices are labelled with their congestion. A minimum dominating set (left, size $2$) has average congestion $8/6$, whereas a size $3$ dominating set (center) achieves average congestion $7/6$. At right, the dominating set $\{v,u\}$ has minimum size, but the set shaded in gray has lower congestion. As $\ell$ grows, the size of this dominating set can be arbitrarily large.}
    \label{fig:sparse-domset}
\end{figure*}

To define a \textit{sparse} dominating set, we first introduce the notion of \textit{congestion}
 which measures how frequently a given set of vertices overlaps with the $r$-neighborhoods
 in a graph.

\begin{definition}
    Given a graph $G=(V,E)$, vertex set $S \subseteq V$, and radius $r \in \N$,
    the \textbf{$r$-congestion} of $S$ at a vertex $v \in V$, denoted $\rcong(S, v)$,
    is $\abs{N^r[v] \cap S}$.
    The \textbf{average $r$-congestion} of $S$ in $G$ is then
    $\ravgcong(S) = \frac{1}{|V|}\sum_{v \in V}{\rcong(S, v)}$.
\end{definition}

We observe that the average congestion of a given set $S$ be computed directly
 from the neighborhood sizes of vertices in $S$.

\begin{lemma}\label{lem:prop-avgcong}
    Average congestion can be computed as
    $\ravgcong(S) = \frac{1}{\abs{V}}\sum_{u \in S}{\abs{N^r[u]}}$.
\end{lemma}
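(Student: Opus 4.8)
The plan is a double-counting argument on the incidences between $S$ and the $r$-neighborhoods of vertices of $G$. We start from the definition $\ravgcong(S) = \frac{1}{|V|}\sum_{v \in V}\rcong(S,v) = \frac{1}{|V|}\sum_{v \in V}|N^r[v] \cap S|$, and rewrite each term $|N^r[v]\cap S|$ as a sum of indicators: $|N^r[v]\cap S| = \sum_{u \in S}[\,u \in N^r[v]\,] = \sum_{u \in S}[\,d_G(u,v)\le r\,]$. Substituting this into the outer sum gives $\ravgcong(S) = \frac{1}{|V|}\sum_{v\in V}\sum_{u\in S}[\,d_G(u,v)\le r\,]$.

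\emph{The key step} is swapping the order of summation and using symmetry of the distance function. Since $d_G(u,v) = d_G(v,u)$, the condition $d_G(u,v)\le r$ is equivalent to $v \in N^r[u]$, so $\sum_{v\in V}[\,d_G(u,v)\le r\,] = |N^r[u]|$ for each fixed $u$. Swapping sums yields $\ravgcong(S) = \frac{1}{|V|}\sum_{u\in S}\sum_{v\in V}[\,v\in N^r[u]\,] = \frac{1}{|V|}\sum_{u\in S}|N^r[u]|$, which is exactly the claimed identity.

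\emph{I expect no real obstacle here} — the only thing to be careful about is making the symmetry of the relation "$u$ lies in the closed $r$-neighborhood of $v$" explicit, since that is precisely what licenses the interchange between counting, for each $v$, the dominators $u$ near it, and counting, for each $u\in S$, the vertices $v$ it covers. One could alternatively phrase the whole thing as computing the number of edges (plus loops) in the bipartite-style incidence structure between $V$ and $S$ induced by the $r$-th power $G^r$ (together with the diagonal), counted once by degrees on the $V$ side and once by degrees on the $S$ side; but the two-line Fubini argument above is cleaner and entirely elementary.
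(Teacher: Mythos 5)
Your proof is correct: the double-counting argument (expanding $\abs{N^r[v]\cap S}$ as a sum of indicators, swapping the order of summation, and invoking the symmetry $d_G(u,v)=d_G(v,u)$) is exactly the standard justification for this identity, which the paper states as an observation without writing out a proof. Nothing is missing.
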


We say an $r$-dominating set is \textit{sparse} when it achieves low average $r$-congestion,
naturally leading to the following problem.

\begin{ProblemBox}{\small \PrbMCDS (\PrbMCDSShort) \normalsize}
    \Input & A graph $G=(V,E)$ and radius $r \in \N$.\\
    \Prob  & Find an $r$-dominating set $D \subseteq V$ such that
    $\ravgcong(D)$ is minimized.\\
\end{ProblemBox}

We remark that this is distinct from the class of problems studied
 in~\cite{JAFFKE2019216, einarson2020general} which put uniform local constraints
 on each vertex (e.g. that they are dominated at least $\lambda$ and at most $\mu$ times).

We write $\mac^r(G)$ ($\mac(G)$ when $r=1$) for the minimum average congestion attainable
 by any $r$-dominating set on $G$.
By Lemma~\ref{lem:prop-avgcong}, $\abs{V}\cdot\ravgcong(D)$
 equals the weighted sum over $D$ of $w(v)=\abs{N^r(v)}$.
Thus, \PrbMCDSShort is a specialization of \PrbMWDSShort.
Furthermore, like other dominating set problems~\cite{slater1976}, \PrbMCDSShort$(G,r)$ is
 equivalent to \PrbMCDSShort$(G^r, 1)$, where $G^r$ denotes
 the $r$\textsuperscript{th} power of $G$.
We now establish that minimizing average congestion is NP-hard.

\begin{theorem}
    \PrbMCDSShort is NP-hard.
    \label{thm:hardness-mac}
\end{theorem}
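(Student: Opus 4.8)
The plan is a reduction from \PrbMDSShort restricted to $4$-regular graphs, which is NP-complete \cite{garey1979computers}, carried out entirely at $r=1$ (this already suffices, since $r$ is part of the input to \PrbMCDSShort). The starting point is Lemma~\ref{lem:prop-avgcong}: for any $S\subseteq V$ we have $\abs{V}\cdot\ravgcong(S)=\sum_{u\in S}\abs{N^r[u]}$, so minimizing average $1$-congestion over $1$-dominating sets $D$ is exactly the same as minimizing the vertex-weighted sum $\sum_{u\in D}w(u)$, where $w(u)=\abs{N[u]}=\deg(u)+1$.

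The observation that makes this reduction go through is that on a $d$-regular graph this weight is constant, $w(u)=d+1$ for every $u$. Hence $\ravgcong(D)=(d+1)\abs{D}/\abs{V}$ for every dominating set $D$, and since the weights are strictly positive, enlarging a dominating set only increases this value; consequently a dominating set of a $d$-regular graph minimizes average $1$-congestion if and only if it has minimum cardinality, and $\mac(G)=(d+1)\gamma(G)/\abs{V}$.

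Given this, the reduction is immediate: from an instance $(G,k)$ of \PrbMDSShort with $G$ a $4$-regular graph on $n$ vertices, output the \PrbMCDSShort instance $(G,1)$ together with the threshold $t=5k/n$; then $G$ has a dominating set of size at most $k$ if and only if $G$ has a $1$-dominating set of average congestion at most $t$. The transformation is clearly polynomial-time, so the decision version of \PrbMCDSShort is NP-hard --- in fact already for $r=1$ and on $4$-regular graphs --- and the routine membership argument (guess $D$, verify $N^r[D]=V$, compute $\ravgcong(D)$) upgrades this to NP-completeness.

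I do not anticipate a genuine technical obstacle; the only step requiring care is the uniform-neighborhood-size observation above, which is precisely why I stay at $r=1$: for $r\ge 2$ the sizes $\abs{N^r[v]}$ need not be constant even on regular graphs, so to get hardness for each fixed $r$ one would instead combine the equivalence between \PrbMCDSShort$(G,r)$ and \PrbMCDSShort$(G^r,1)$ noted earlier with a hardness result for weighted domination, or build explicit gadgets equalizing $r$-neighborhood sizes. It is worth stressing that \PrbMCDSShort being a special case of \PrbMWDSShort does not by itself yield hardness, since here the weights are forced to be neighborhood sizes rather than arbitrary values; the regularity reduction is exactly what closes that gap.
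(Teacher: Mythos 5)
Your proof is correct and takes essentially the same route as the paper: both exploit Lemma~\ref{lem:prop-avgcong} to observe that on a $d$-regular graph $\avgcong(D)=(d+1)\abs{D}/\abs{V}$, so minimizing congestion coincides with minimizing cardinality, and then invoke NP-hardness of \PrbMDSShort on regular graphs. Your additional remarks on the decision-version threshold and on why the argument does not immediately extend to fixed $r\ge 2$ are sound but not needed for the theorem as stated.
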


\begin{proof}
    We show that \PrbMCDSShort with $r=1$ is equivalent to \PrbMDSShort when
    $G$ is regular.
    By Lemma~\ref{lem:prop-avgcong}, if $G$ is $d$-regular, then
    $\ravgcong(S) = \frac{1}{\abs{V}}\sum_{u \in S} \abs{N_{G}[u]} =
    \left(1+d\right)\abs{S}/\abs{V}$.
    Thus, any minimum congestion dominating set must also be minimum in size.
    The result follows directly, since $\PrbMDSShort$ is NP-hard in regular graphs \cite{garey1979computers}.
\end{proof}

Additionally, we observe that determining the value of $\mac_r(G)$ is hard
via its relationship to perfect codes. Specifically, $G$ admits an $r$-perfect code if and only if $\mac_r(G)=1$, but determining the existence of a perfect code is NP-hard \cite{KRATOCHVIL1994191,marilynn1997}.


A minimum congestion dominating set thus represents the ``distance'' to a perfect code,
leading to a natural graph editing problem whose optimal solution is bounded by a linear function of $\mac(G)$.

\begin{ProblemBox}{\PrbPCE}
    \Input & A graph $G=(V,E)$ and integer $k$.\\
    \Prob  & Is there an edge set $S \subseteq E$ of size at most $k$ such that
            $G'=(V,E\setminus S)$ admits a perfect code?
\end{ProblemBox}

\begin{theorem}\label{thm:prop-pce}
    \raggedright
    Let $\textnormal{PCE}(G)$ be the minimum $k$ so $(G,k)$ is a yes-instance of \PrbPCE.
    Then $\textnormal{PCE}(G) \leq (\mac (G) - 1)n \leq 2\cdot \textnormal{PCE}(G)$.
\end{theorem}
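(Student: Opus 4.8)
The plan is to prove the two inequalities separately, in both cases using the reformulation of average congestion from Lemma~\ref{lem:prop-avgcong}. Fix $r=1$ and, for a dominating set $D$ of $G$, write $t_v := \abs{N_G[v]\cap D} \ge 1$ (the inequality holding because $D$ dominates). Then $\ravgcong(D) = \tfrac1n\sum_{v\in V} t_v$, so $(\ravgcong(D)-1)\,n = \sum_{v\in V}(t_v-1)$, the total \emph{excess} of $D$. Consequently $(\mac(G)-1)n$ is exactly the minimum total excess over all dominating sets, and $D$ is a perfect code precisely when its excess is $0$. Both bounds then reduce to relating excess to the number of edge deletions needed.

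For the upper bound $\textnormal{PCE}(G)\le(\mac(G)-1)n$, I would take a dominating set $D$ of $G$ achieving $\mac(G)$ and exhibit an edit set $S$ of the right size. Put into $S$ every edge of $G[D]$, and for each $v\notin D$ all but one of the edges joining $v$ to $N_G(v)\cap D$ (nonempty since $D$ dominates $v$). Writing $G':=(V,E\setminus S)$, no two vertices of $D$ are adjacent in $G'$ and every vertex outside $D$ has exactly one neighbor in $D$ in $G'$; hence $D$ is a perfect code of $G'$, so $(G,\abs S)$ is a yes-instance. A direct count gives $\abs S = \norm{G[D]} + \sum_{v\notin D}(\abs{N_G(v)\cap D}-1)$, and splitting the excess sum over $D$ and $V\setminus D$ (using $\sum_{v\in D}\abs{N_G(v)\cap D}=2\norm{G[D]}$) shows this equals $\sum_{v\in V}(t_v-1)-\norm{G[D]}\le(\mac(G)-1)n$.

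For the lower bound $(\mac(G)-1)n\le 2\,\textnormal{PCE}(G)$, I would take an optimal edit set $S^\star$ for \PrbPCE, so that $G':=(V,E\setminus S^\star)$ has a perfect code $D^\star$ with $\abs{S^\star}=\textnormal{PCE}(G)$. Since $N_{G'}[x]\subseteq N_G[x]$, $D^\star$ is still a dominating set of $G$, hence $\mac(G)\le\ravgcong(D^\star)$ evaluated in $G$. For each $v$, $\abs{N_G[v]\cap D^\star}$ exceeds $\abs{N_{G'}[v]\cap D^\star}=1$ by exactly the number of deleted edges $vu$ with $u\in D^\star$; summing over $v$, this total excess is at most $2\abs{S^\star}$, since each edge of $S^\star$ is counted at most once at each of its two endpoints. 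Therefore $n\cdot\ravgcong(D^\star)\le n+2\abs{S^\star}$, which rearranges to the claimed inequality.

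The arithmetic is routine; the only place needing care is the upper-bound construction, where one must check that the deletions neither break domination nor remove the single surviving $D$-edge of a vertex outside $D$. This is ensured because $S$ contains only edges incident to $D$, and for each outside vertex exactly one such edge is deliberately retained (and that edge, having an endpoint outside $D$, is not eligible for deletion through any other vertex). Beyond that, the whole argument hinges on the single observation that $(\mac(G)-1)n$ equals the minimum total excess, after which both directions are immediate.
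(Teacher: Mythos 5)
Your proof is correct and follows essentially the same route as the paper's: both directions rest on reading $(\mac(G)-1)n$ as the total excess congestion $\sum_{v}(\abs{N[v]\cap D}-1)$ and observing that each edge deletion (resp.\ addition) changes this sum by at least $1$ (resp.\ at most $2$). The only real difference is that the paper removes excess edges one at a time in a greedy loop while you exhibit the whole edit set in one shot, which incidentally yields the marginally stronger bound $\textnormal{PCE}(G)\leq(\mac(G)-1)n-\norm{G[D]}$.
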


\begin{proof}
    Given a dominating set $D \subseteq V$ that attains $\mac(G)$, let $v \in V$ be
    a vertex such that $\cong(D,v) > 1$.
    Then, removing an edge $uv$ for any $u \in D \setminus \{v\}$ will decrease
    $(\mac(G)\cdot n)$ by at least $1$, so $\textnormal{PCE}(G)\leq (\mac(G)-1)n$.
    Given a perfect code $D' \subseteq V$ of $G'$, any edge addition can increase
    $(\mac(G)\cdot n)$ by at most $2$, so $(\mac(G)-1)n \leq 2\cdot\textnormal{PCE}(G)$.
\end{proof}

\subsection{Properties of Minimum Congestion Dominating Sets}

In general, a minimum congestion dominating set will not also be a minimum dominating set,
 and in Figure~\ref{fig:sparse-domset} we give a construction proving their sizes can diverge
 arbitrarily.
Further, by definition, we have $\mac(G) \geq 1$ for any graph;
 we give a degree-based upper bound below.

\begin{theorem}
    $\mac(G)\leq (\overline{d}+1)/2$ for every graph $G$,
     where $\overline{d}$ is the average degree of $G$.
    \label{thm:prop-cong-bound}
\end{theorem}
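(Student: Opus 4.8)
The plan is to prove the bound by a balancing argument: I will exhibit a single $1$-dominating set whose average congestion is at most $(\overline{d}+1)/2$. By Lemma~\ref{lem:prop-avgcong}, $\ravgcong(D) = \frac{1}{n}\sum_{u\in D}(\deg(u)+1)$ for any vertex set $D$. Since $\overline{d} = 2m/n$, we have $n\cdot(\overline{d}+1)/2 = m + n/2 = \tfrac12\sum_{v\in V}(\deg(v)+1)$, so it suffices to produce a dominating set $D$ with $\sum_{u\in D}(\deg(u)+1) \le \tfrac12\sum_{v\in V}(\deg(v)+1)$, i.e.\ one capturing at most ``half'' the total closed-neighborhood weight.

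The key ingredient is the classical fact (Ore) that in a graph with no isolated vertices the complement of a minimal dominating set is itself dominating. I would first establish this: let $D$ be an inclusion-minimal dominating set and $\overline{D} = V\setminus D$. Each vertex of $\overline{D}$ trivially lies in $N[\overline{D}]$. For $v\in D$, minimality of $D$ means $D\setminus\{v\}$ fails to dominate some vertex $w$, so $N[w]\cap D = \{v\}$; if $w\ne v$ then $w\in\overline{D}\cap N(v)$, and if $w=v$ then $v$ has no neighbor in $D$, so (as $v$ is not isolated) some neighbor of $v$ lies in $\overline{D}$. Either way $v\in N[\overline{D}]$, so $\overline{D}$ is dominating.

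Given that both $D$ and $\overline{D}$ are dominating, the identity
\[
  \sum_{u\in D}(\deg(u)+1) + \sum_{u\in\overline{D}}(\deg(u)+1) = \sum_{v\in V}(\deg(v)+1)
\]
forces one of the two sums to be at most half the right-hand side. Taking the smaller side, call it $D^\star$, yields $\ravgcong(D^\star) \le \frac{1}{2n}\sum_{v\in V}(\deg(v)+1) = (\overline{d}+1)/2$, and hence $\mac(G) \le (\overline{d}+1)/2$.

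The one point requiring care is isolated vertices, for which Ore's fact (and indeed the stated bound, e.g.\ on $K_1$) breaks down, since an isolated vertex must belong to every dominating set and lies in no other closed neighborhood. The intended hypothesis is therefore a graph with no isolated vertices (in particular any connected graph on at least two vertices); alternatively one can delete the isolated set $I$, run the argument on $G-I$, and re-add $I$, at an additive cost of $|I|/(2n)$. Beyond this caveat the argument is routine, so the only real content is invoking the minimal-dominating-set/complement dichotomy and the averaging step above.
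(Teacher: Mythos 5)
Your proof is essentially the paper's own argument: take an inclusion-minimal dominating set $D$, use Ore's observation that $\overline{D}=V\setminus D$ is also dominating, note via Lemma~\ref{lem:prop-avgcong} that the two average congestions sum to $\overline{d}+1$, and keep the smaller one. Your caveat about isolated vertices is a genuine refinement the paper omits --- the complement of a minimal dominating set need not dominate when isolated vertices exist, and the stated bound in fact fails on $K_1$ --- so the theorem implicitly assumes minimum degree at least one.
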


\begin{proof}
    Let $D \subseteq V$ be a minimal dominating set
    (i.e. every proper subset $D' \subset D$ is not dominating).
     Then, $\overline{D}:=V\setminus D$ is also a dominating set. Now,
    $\avgcong(D) + \avgcong(\overline{D}) =
    \frac{1}{n}\sum_{v \in D}\abs{N[v]} + \frac{1}{n}\sum_{v \in \overline{D}}\abs{N[v]}$.
    Rewriting the right-hand side as $\frac{1}{n}\sum_{v \in V}\abs{N[v]} = \overline{d} + 1$,
    we have $\mac(G)\leq \min\{\avgcong(D), \avgcong(\overline{D})\} \leq (\overline{d} + 1) / 2$.
\end{proof}



\subsection{Algorithms}

We now describe several greedy algorithms along with an ILP formulation for \PrbMCDSShort.

\subsubsection{Greedy Algorithms}

We first recall the standard greedy algorithm for \PrbMDS
 which we call \AlgDegree.
At each step, the algorithm chooses a vertex $v \in V$ such that
 the number of undominated vertices in $N^r[v]$ is maximized.
To instead target minimizing congestion, we prioritize based on
 the \textit{ratio} of the undominated vertices.
Specifically, \AlgRatio chooses a vertex $v \in V$ such that
 $\frac{\abs{N^r[v] \setminus N^r[D]}}{\abs{N^r[v]}}$ is maximized
 (equivalently, $\frac{\abs{N^r[v] \cap N^r[D]}}{\abs{N^r[v]}}$ is minimized)
 at each step given
 a partial dominating set $D \subseteq V$.
In both algorithms, ties are broken arbitrarily.
While \AlgDegree is an $\bigo{r\log\Delta}$-approximation for \PrbRDom,
 it is not for \PrbMCDSShort (Theorem~\ref{thm:dom-degree-unbounded}).

\begin{theorem}\label{thm:dom-degree-unbounded}
    \AlgDegree does not approximate \PrbMCDSShort.
\end{theorem}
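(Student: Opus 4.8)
The statement "\AlgDegree does not approximate \PrbMCDSShort" should mean: there is no constant (or even bounded) $c$ such that \AlgDegree always returns an $r$-dominating set whose average congestion is within a factor $c$ of $\mac^r(G)$. The natural way to prove this is to exhibit an infinite family of instances on which the ratio between the congestion of \AlgDegree's output and the optimal congestion grows without bound. Since \AlgDegree greedily picks the vertex covering the most new vertices — i.e., it is trying to minimize \emph{size} — it will be lured into picking high-degree ``hub'' vertices, which is exactly what blows up congestion by Lemma~\ref{lem:prop-avgcong}.

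\textbf{Construction.} I would work with $r=1$ and build a graph $G_\ell$ that has two qualitatively different dominating sets: a ``cheap'' one made of many low-degree vertices with small closed neighborhoods, and a ``greedy-tempting'' one using one enormous-degree vertex. Concretely, take a large star (or a vertex $h$ adjacent to a large independent set $I$ of size $\ell$), so $\deg(h)=\ell$; then attach to each leaf of $I$ a small private gadget (e.g. a pendant edge, or a small clique) that $h$ does \emph{not} dominate, so that $h$ alone is not dominating but is still the first pick of \AlgDegree. After \AlgDegree grabs $h$ (covering $\ell+1$ vertices at once), it must still cover all the private gadgets, and crucially it should be forced (or at least allowed, under arbitrary tie-breaking — and for a lower bound on the \emph{best} behavior we only need it forced) to then also include the gadget vertices; the net effect is a dominating set containing $h$, whose contribution $\abs{N[h]} = \ell+1$ alone makes $\avgcong \geq (\ell+1)/n$. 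Meanwhile, an optimal congestion solution avoids $h$ entirely, dominating $I$ and the gadgets using only the gadget/leaf vertices, each of bounded degree, giving $\mac(G_\ell) = O(1)$ while $n = \Theta(\ell)$. Tuning the gadget so that $n$ stays $\Theta(\ell)$ forces \AlgDegree's average congestion to be $\Omega(1)$ times a term that does not go to zero, yet strictly larger than $\mac$ by a factor that grows with $\ell$. (If keeping $n=\Theta(\ell)$ while making the ratio diverge is awkward with a single hub, I would instead use $\Theta(\ell)$ disjoint copies of a bounded-size "hub gadget," so that the greedy-vs-optimal congestion ratio within each copy is a fixed constant $>1$, and then I do not even need divergence — I only need to beat \emph{every} constant, which requires the per-copy ratio itself to be unbounded, so I would scale the hub degree inside each copy; this is the cleaner route.)

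\textbf{Key steps, in order.} (1) Define $G_\ell$ precisely and compute $n$, $\deg(h)$, and the neighborhood sizes of all vertex types. (2) Trace \AlgDegree: show $h$ is the unique maximizer in the first step (strictly more new-coverage than any alternative), and then argue that whatever it does afterward, the returned set $D$ contains $h$, so by Lemma~\ref{lem:prop-avgcong}, $\avgcong(D) \geq \abs{N[h]}/n$. (3) Exhibit an explicit $r$-dominating set $D^\star$ of $G_\ell$ using only bounded-degree vertices and bound $\avgcong(D^\star) = \frac1n\sum_{u\in D^\star}\abs{N[u]} \leq \mac(G_\ell)$ from above by a constant or a slowly-growing quantity. (4) Conclude $\avgcong(D)/\mac(G_\ell) \to \infty$ as $\ell \to \infty$, so no constant-factor guarantee can hold; since the construction uses $r=1$, the same holds for general $r$ via the $G \mapsto G^r$ reduction noted earlier (or just by padding).

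\textbf{Main obstacle.} The delicate part is the interaction between two competing requirements: I need \AlgDegree to be \emph{forced} into the high-congestion vertex regardless of tie-breaking (so the gadget must make $h$ strictly the best first pick and must not offer a congestion-cheap way to finish), while simultaneously keeping $\mac(G_\ell)$ genuinely small — which means the gadgets and the domination of $I$ must be achievable entirely by small-neighborhood vertices. Getting both at once, and making the ratio provably unbounded rather than merely bounded-below-by-a-constant, is where the care goes; the pendant-edge or small-clique gadget attached to each neighbor of the hub, combined with letting $\deg(h)$ grow, is the mechanism I expect to make it work, but verifying the first-step maximization and the "$h$ stays in $D$" claim cleanly is the crux.
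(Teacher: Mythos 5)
There is a genuine gap, and it is exactly at the point you flag as the crux. Your quantitative lower bound in step (2) is $\avgcong(D) \geq \abs{N[h]}/n$, but by Lemma~\ref{lem:prop-avgcong} a single vertex can contribute at most $n$ to the sum $\sum_{u\in D}\abs{N[u]}$, so this bound never exceeds $1$. Since $\mac(G) \geq 1$ for every graph, no construction in which the ``expensive'' part of $D$ is a single hub (or one hub per bounded-size copy) can certify a ratio larger than a constant: if the greedy finishes the gadgets with low-degree vertices, the total neighborhood mass of $D$ is $O(n)$ and $\avgcong(D)=O(1)$. Your fallback of taking $\Theta(\ell)$ disjoint copies and growing the hub degree inside each copy runs into the same wall, because within each copy the hub still contributes at most the copy size. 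To make the ratio diverge you must force (or permit) \AlgDegree to select a set whose closed-neighborhood sizes sum to $\omega(n)$ --- i.e., \emph{many} high-degree vertices with heavily overlapping neighborhoods --- while an optimal solution still achieves $O(1)$ congestion.

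The paper's construction does precisely this and is worth contrasting with yours. It takes a biclique $(A,B)$ with $\abs{A}=\abs{B}=k$ and attaches one pendant leaf to each biclique vertex, so $n=4k$. The first two greedy picks are some $u\in A$ and $v\in B$; after that, \emph{every} remaining biclique vertex and every remaining leaf has exactly one undominated vertex in its closed neighborhood, so under arbitrary tie-breaking \AlgDegree may end up with $D=A\cup B$. Each of these $2k$ vertices has closed neighborhood of size $k+2$, so $\avgcong(D)=(k+2)/2=n/8+1$, whereas the set of leaves is a perfect code, giving $\mac(G)=1$ and an unbounded ratio. Note also that the theorem only requires exhibiting \emph{some} legal run of the algorithm (ties are broken arbitrarily), so you do not need your construction to force the bad outcome under every tie-breaking rule --- a requirement that was making your gadget design harder than necessary. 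If you want to salvage your hub-based intuition, the lesson is that the adversarial instance needs $\Omega(n)$ hubs whose neighborhoods all overlap, which is what the biclique provides.
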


\begin{proof}
    Let $r=1$. Consider a graph $G=(V,E)$ that has a biclique $(A,B)$ with
    $\abs{A}=\abs{B}=k$ and one attached leaf $x_v$ for each $v \in A \cup B$. Thus,
    $n=\abs{V}=4k$. In the first two iterations, \AlgDegree should choose vertices $u \in A$
    and $v \in B$. But then, all $4(k-1)$ vertices in $V \setminus \{u, v, x_u, x_v\}$
    equally have one undominated vertex in their neighborhoods. Thus, \AlgDegree may choose
    $D := A \cup B$, which gives $\avgcong(D)=n/8 + 1$.
    However, $\overline{D} := V \setminus D$ is a perfect code, so $\mac(G)=1$.\looseness-1
\end{proof}

In contrast, \AlgRatio can produce sets which are arbitrarily bigger
 than the minimum dominating set.
 (Figure~\ref{fig:sparse-domset} (right) is an example of this),
 yet we prove it is an approximation for \PrbMCDSShort\footnote{An abridged proof of Theorem~\ref{thm:approx-ratio-greedy} is in Appendix~\ref{sec:theory}.}:

\begin{theorem}
    \AlgRatio is an $\bigo{\sqrt{\Delta^r}}$-approxima\-tion algorithm for \PrbMCDSShort.
    \label{thm:approx-ratio-greedy}
\end{theorem}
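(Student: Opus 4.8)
The plan is to recognize \AlgRatio as the standard greedy algorithm for minimum-cost set cover, applied to the instance whose universe is $V$ and whose sets are $\{N^r[u] : u \in V\}$, the cost of $N^r[u]$ being $|N^r[u]|$. Under this correspondence, Lemma~\ref{lem:prop-avgcong} says $n\cdot\ravgcong(D)$ is exactly the cost of the cover $D$, and $n\cdot\mac^r(G)$ is the optimum cost $W^{\ast}$ (note $W^{\ast}\ge n$, since the chosen sets cover all $n$ vertices). Write $u_1,\dots,u_t$ for the vertices chosen by \AlgRatio in order, let $C_{i-1}=N^r[\{u_1,\dots,u_{i-1}\}]$ be the set dominated before step $i$, let $\mathrm{new}_i=|N^r[u_i]\setminus C_{i-1}|$, let $n_i=n-|C_{i-1}|\ge 1$ be the number of still-undominated vertices at step $i$, and let $\rho_i=|N^r[u_i]|/\mathrm{new}_i$ be the \emph{density} of the chosen set. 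Then $\sum_i \mathrm{new}_i=n$ and $n\cdot\ravgcong(D)=\sum_i|N^r[u_i]|=\sum_i \rho_i\,\mathrm{new}_i$.

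The only property of the greedy choice I need is $\rho_i\le W^{\ast}/n_i$: an optimum cover, viewed as a cover of just the $n_i$ undominated vertices, has total cost $\le W^{\ast}$ and covers $n_i$ elements, so, restricting to its sets with positive residual (which exist since $V\setminus C_{i-1}\ne\emptyset$), averaging yields some $u$ with $|N^r[u]|/|N^r[u]\setminus C_{i-1}|\le W^{\ast}/n_i$, and \AlgRatio picks a set at least this dense. Now fix the threshold $\tau=\sqrt{N_r}$ with $N_r:=\max_{u\in V}|N^r[u]|$, and split the chosen sets into \emph{light} ($\rho_i\le\tau$) and \emph{heavy} ($\rho_i>\tau$) ones. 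Light sets contribute $\sum_{i\,\mathrm{light}}\rho_i\,\mathrm{new}_i\le\tau\sum_i\mathrm{new}_i=\tau n\le\tau W^{\ast}$. For a heavy set, $\rho_i>\tau$ and $\rho_i\le W^{\ast}/n_i$ force $n_i<W^{\ast}/\tau$; since $n_i$ strictly decreases, the steps with $n_i<W^{\ast}/\tau$ form a suffix of fewer than $W^{\ast}/\tau$ steps (each step drops $n_i$ by at least one), so there are fewer than $W^{\ast}/\tau$ heavy sets, each of cost at most $N_r$, contributing less than $N_r\cdot W^{\ast}/\tau=\sqrt{N_r}\,W^{\ast}$. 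Adding the two bounds, $n\cdot\ravgcong(D)<2\sqrt{N_r}\,W^{\ast}$, that is, $\ravgcong(D)<2\sqrt{N_r}\cdot\mac^r(G)$.

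It remains to relate $N_r$ to $\Delta^r$: for $\Delta\ge 2$ we have $|N^r[u]|\le 1+\Delta+\dots+\Delta^r\le 2\Delta^r$, hence $\sqrt{N_r}=\bigo{\sqrt{\Delta^r}}$, while for $\Delta\le 1$ the graph has a perfect code that \AlgRatio recovers, so $\ravgcong(D)=\mac^r(G)=1$. (Alternatively, one can first invoke the equivalence of \PrbMCDSShort$(G,r)$ and \PrbMCDSShort$(G^r,1)$ observed above, run the argument for $r=1$ on $G^r$, and use $\Delta(G^r)=\bigo{\Delta^r}$.) The running-time bound is routine: when $u_i$ is added, only the $\bigo{\Delta^{2r}}$ vertices within distance $2r$ of $u_i$ can change residual, their densities can be updated incrementally, and a priority queue over at most $n$ iterations gives the stated time.

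The analysis has no single hard step; the care lies in two places. First, the density bound $\rho_i\le W^{\ast}/n_i$ must be stated for sets with \emph{nonzero} residual, and one must verify such a set exists at every step with undominated vertices remaining. Second, the ``heavy sets form a suffix'' observation should be derived from the trivial monotonicity of $n_i$ alone, so no additional structural claim about greedy (such as monotonicity of $\rho_i$) is invoked. A minor point is making the constant hidden in $\sqrt{N_r}=\bigo{\sqrt{\Delta^r}}$ uniform over the small-degree corner cases.
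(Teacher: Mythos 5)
Your proof is correct, but it takes a genuinely different route from the paper's. You cast \AlgRatio as the standard greedy for weighted set cover (sets $N^r[u]$, cost $\abs{N^r[u]}$), use the mediant bound $\rho_i \leq W^{\ast}/n_i$ against the \emph{true} optimum $W^{\ast}=n\cdot\mac^r(G)$, and then avoid the usual harmonic-series $\log n$ loss by splitting the chosen sets at density threshold $\sqrt{N_r}$: light sets are charged to $W^{\ast}\geq n$, heavy sets are confined to a suffix of fewer than $W^{\ast}/\sqrt{N_r}$ steps. This yields a genuine relative guarantee $\ravgcong(D)\leq \bigo{\sqrt{\Delta^r}}\cdot\mac^r(G)$. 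The paper instead works on $G^r$, tracks the global domination fraction $t$ as a function of the local ratio $\rho$ of the chosen vertex, derives $t\geq f(\rho)$ with $f(x)=1/(1+\Delta(1-x)^2/x)$ by counting edges between $U=N[D]\setminus D$ and $W=V\setminus N[D]$, and integrates $1/f^{-1}(t)$ to obtain the \emph{absolute} bound $\avgcong(D)\leq \frac{\pi}{2}\sqrt{\Delta}+\bigo{1}$, which is then divided by the trivial lower bound $\mac(G)\geq 1$. The two arguments buy slightly different things: the paper's bound is independent of the optimum, so when $\mac^r(G)$ is large the implied ratio improves, whereas yours is a true relative ratio, is more elementary (purely discrete charging, no integral or inverse-function manipulation), and would extend directly to weighted generalizations of \PrbMCDSShort. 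All the delicate points in your argument check out: the existence of a positive-residual optimal set at every step, the suffix count for heavy steps using only monotonicity of $n_i$, and the inequality $W^{\ast}\geq n$ needed for the light-set charge.
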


To evaluate smarter tie-breaking strategies, we define \linebreak
 \AlgDegreePlus to be \AlgDegree with ties broken using \AlgRatio's criteria
 (the ratio of undominated vertices),
 and \AlgRatioPlus analogously using \AlgDegree's criteria (the number of undominated vertices).
Further ties are randomly broken.
All of these algorithms run in time $\bigo{\Delta^{2r}n \log n}$ (details in Appendix~\ref{sec:alg-details}).

\subsubsection{Integer Programming}

We observe that one may obtain optimal solutions to \PrbRDom and \PrbMCDSShort
using an ILP solver, allowing empirical evaluation of approximation ratios
in Section~\ref{sec:exp}.
We use the following ILP formulation for \PrbRDom:
\begin{LabelBox}{\AlgMDS}
    \begin{talign*}
        \text{Let } &x_v \in \{0,1\}\text{ be variables} &\text{for all }v \in V.\\
        \text{Minimize } &\sum_{v \in V} x_v\\
        \text{Subject to } &\sum_{w \in N^r[v]}x_w \geq 1 &\text{for all }v \in V.
    \end{talign*}
\end{LabelBox}

Similarly, we formulate \PrbMCDSShort as follows:
\begin{LabelBox}{\AlgMAC}
    \begin{talign*}
        \text{Let } &x_v \in \{0,1\}\text{ be variables} &\text{for all }v \in V.\\
        \text{Minimize } &\sum_{v \in V} \abs{N^r[v]}x_v\\
        \text{Subject to } &\sum_{w \in N^r[v]}x_w \geq 1 &\text{for all }v \in V.
    \end{talign*}
\end{LabelBox}

This is based on the fact that \PrbMCDSShort is a specialization of \PrbMWDS.
In both cases, the solution set is provided by $\{v \in V: x_v=1\}$.

\section{Neighborhood Partitioning}
\label{sec:nbr-prt}

\begin{figure*}[ht]
    \centering

    \pgfdeclarelayer{bg}
    \pgfsetlayers{bg, main}

    \tikzstyle{bigblacknode} = [circle, fill=gray, text=white, draw, thick, scale=1, minimum size=0.6cm, inner sep=1.5pt]
    \tikzstyle{bigwhitenode} = [circle, fill=white, text=black, draw, thick, scale=1, minimum size=0.6cm, inner sep=1.5pt]

    \tikzstyle{blacknode} = [circle, fill=gray, draw, thick, scale=1, minimum size=0.2cm, inner sep=1.5pt]
    \tikzstyle{whitenode} = [circle, fill=white, draw, thick, scale=1, minimum size=0.2cm, inner sep=1.5pt]

    \definecolor{color1}{RGB}{99,110,250}
    \definecolor{color4}{RGB}{177,183,253}
    \definecolor{color2}{RGB}{239,85,59}
    \definecolor{color5}{RGB}{247,170,157}
    \definecolor{color3}{RGB}{0,204,150}
    \definecolor{color6}{RGB}{0,230,203}

    \begin{minipage}{0.48\textwidth}
        \begin{tikzpicture}
            \begin{scope}[every node/.style=bigwhitenode]
                \node[fill=color4] (c1) at (1.5, 2) {$C_1$};
                \node[fill=color6] (c2) at (3.5, 2) {$C_2$};
                \node[fill=color5] (c3) at (5.5, 2) {$C_3$};

                \node[fill=color4] (x1) at (1, 0) {$x_1$};
                \node[fill=color5] (x2) at (2, 0) {$x_2$};
                \node[fill=color5] (x3) at (3, 0) {$x_3$};
                \node[fill=color5] (x4) at (4, 0) {$x_4$};
                \node[fill=color4] (x5) at (5, 0) {$x_5$};
                \node[fill=color4] (x6) at (6, 0) {$x_6$};
            \end{scope}

            \node[bigblacknode, fill=color1] (u1) at (1.5, 3) {\textbf{$u_1$}};
            \node[bigblacknode, fill=color2] (u2) at (3.5, 3) {\textbf{$u_2$}};
            \node[bigblacknode, fill=color3] (u3) at (5.5, 3) {\textbf{$u_3$}};

            \node[whitenode, fill=color6] (z1) at (5.2, 4) {};
            \node[whitenode, fill=color6] (z2) at (5.5, 4) {};
            \node[whitenode, fill=color6] (z3) at (5.8, 4) {};

            \begin{pgfonlayer}{bg}
                \foreach \i in {1,...,3} {
                    \foreach \j in {1,...,3} { \draw (u\i) -- (c\j); }
                    \draw (u3) -- (z\i);
                }
                \draw (c1) -- (x1);
                \draw (c1) -- (x5);
                \draw (c1) -- (x6);
                \draw (c2) -- (x1);
                \draw (c2) -- (x2);
                \draw (c2) -- (x3);
                \draw (c3) -- (x2);
                \draw (c3) -- (x3);
                \draw (c3) -- (x4);
            \end{pgfonlayer}

            \draw[dashed, rounded corners=1ex] (0.5, -0.5) rectangle (6.5, 0.5);
            \draw[dashed, rounded corners=1ex] (1, 1.6) rectangle (6, 2.4);
            \draw[dashed, rounded corners=1ex] (1, 2.6) rectangle (6, 3.4);

            \node[align=right, anchor=east] at (8.2, 3) {landmarks};
            \node[align=left, anchor=west] at (6, 3.2) {$L$};
            \node[align=left, anchor=west] at (6, 2.2) {$\mathcal{C}$};
            \node[align=left, anchor=west] at (6.5, 0.2) {$X$};
            \node[align=right, anchor=east] at (8.2, 2) {$3$-sets};
            \node[align=right, anchor=east] at (8.2, 0) {elements};
        \end{tikzpicture}

        \captionof{figure}{Reduction from \PrbXTC to \PrbBNPShort,
        showing an example: $C_1=\{x_1,x_5,x_6\}, C_2=\{x_1,x_2,x_3\}, C_3=\{x_2,x_3,x_4\}$ with
        solution $\{C_1,C_3\}$.
        Landmarks are a $2$-dominating set.}
        \label{fig:nbr-hardness-1}
    \end{minipage}
    \hfill
    \begin{minipage}{0.48\textwidth}
        \begin{tikzpicture}
            \node[bigblacknode, fill=color1] (u1) at (2.2, 3) {\textbf{$u_1$}};
            \node[bigblacknode, fill=color2] (u2) at (5.4, 3) {\textbf{$u_2$}};

            \begin{scope}[every node/.style=bigwhitenode]
                \node[fill=color4] (x1t) at (1, 2) {$x_1$};
                \node[fill=color5] (x1f) at (1.8, 2) {$\overline{x}_1$};
                \node[fill=color5] (x2t) at (2.6, 2) {$x_2$};
                \node[fill=color4] (x2f) at (3.4, 2) {$\overline{x}_2$};
                \node[fill=color4] (x3t) at (4.2, 2) {$x_3$};
                \node[fill=color5] (x3f) at (5.0, 2) {$\overline{x}_3$};
                \node[fill=color5] (x4t) at (5.8, 2) {$x_4$};
                \node[fill=color4] (x4f) at (6.6, 2) {$\overline{x}_4$};

                \node[fill=color4] (y1) at (5.5, 0.5) {$y_1$};
                \node[fill=color4] (y2) at (6.4, 0.5) {$y_2$};
                \node[fill=color4] (y3) at (7.3, 0.5) {$y_3$};
                \node[fill=color4] (y4) at (8.2, 0.5) {$y_4$};

                \node[fill=color4] (c1) at (1, 0.5) {$\phi_1$};
                \node[fill=color4] (c2) at (1.9, 0.5) {$\phi_2$};
                \node[fill=color4] (c3) at (2.8, 0.5) {$\phi_3$};
            \end{scope}

            \begin{scope}[every node/.style=whitenode]
                \node[fill=color4] (zc11) at ([shift={(-0.3, -0.8)}] c1) {};
                \node[fill=color4] (zc12) at ([shift={(   0, -0.8)}] c1) {};
                \node[fill=color4] (zc13) at ([shift={( 0.3, -0.8)}] c1) {};
                \node[fill=color4] (zc21) at ([shift={(-0.3, -0.8)}] c2) {};
                \node[fill=color4] (zc22) at ([shift={(   0, -0.8)}] c2) {};
                \node[fill=color4] (zc23) at ([shift={( 0.3, -0.8)}] c2) {};
                \node[fill=color4] (zc31) at ([shift={(-0.3, -0.8)}] c3) {};
                \node[fill=color4] (zc32) at ([shift={(   0, -0.8)}] c3) {};
                \node[fill=color4] (zc33) at ([shift={( 0.3, -0.8)}] c3) {};

                \node[fill=color4] (zy11) at ([shift={(-0.3, -0.8)}] y1) {};
                \node[fill=color4] (zy12) at ([shift={(   0, -0.8)}] y1) {};
                \node[fill=color4] (zy13) at ([shift={( 0.3, -0.8)}] y1) {};
                \node[fill=color4] (zy21) at ([shift={(-0.3, -0.8)}] y2) {};
                \node[fill=color4] (zy22) at ([shift={(   0, -0.8)}] y2) {};
                \node[fill=color4] (zy23) at ([shift={( 0.3, -0.8)}] y2) {};
                \node[fill=color4] (zy31) at ([shift={(-0.3, -0.8)}] y3) {};
                \node[fill=color4] (zy32) at ([shift={(   0, -0.8)}] y3) {};
                \node[fill=color4] (zy33) at ([shift={( 0.3, -0.8)}] y3) {};
                \node[fill=color4] (zy41) at ([shift={(-0.3, -0.8)}] y4) {};
                \node[fill=color4] (zy42) at ([shift={(   0, -0.8)}] y4) {};
                \node[fill=color4] (zy43) at ([shift={( 0.3, -0.8)}] y4) {};

                \node[fill=color5] (zu1) at (4.3,4) {};
                \node[fill=color5] (zu2) at (4.6,4) {};
                \node[fill=color5] (zu3) at (4.9,4) {};
                \node[fill=color5] (zux) at (6.5,4) {};
            \end{scope}

            \begin{pgfonlayer}{bg}
            \foreach \i in {1,...,4} {
                \draw (u1) -- (x\i t.north);
                \draw (u1) -- (x\i f.north);
                \draw (u2) -- (x\i t.north);
                \draw (u2) -- (x\i f.north);
                \draw (y\i.north) -- (x\i t.south);
                \draw (y\i.north) -- (x\i f.south);
            }
            \draw (c1.north) -- (x1t.south);
            \draw (c1.north) -- (x3f.south);
            \draw (c1.north) -- (x4f.south);
            \draw (c2.north) -- (x1f.south);
            \draw (c2.north) -- (x2t.south);
            \draw (c2.north) -- (x4f.south);
            \draw (c3.north) -- (x2t.south);
            \draw (c3.north) -- (x3t.south);
            \draw (c3.north) -- (x4t.south);

            \foreach \i in {1,...,3} {
                \foreach \j in {1,...,3} { \draw (c\j) -- (zc\j\i);}
                \foreach \j in {1,...,4} { \draw (y\j) -- (zy\j\i);}
            }

            \draw[densely dotted] (5.2,4) -- (6.2,4);
            \draw (u2) -- (zu1);
            \draw (u2) -- (zu2);
            \draw (u2) -- (zu3);
            \draw (u2) -- (zux);
            \end{pgfonlayer}

            \draw[dashed, rounded corners=1ex] (0.6, 0.1) rectangle (3.2, 0.9);
            \draw[dashed, rounded corners=1ex] (5.1, 0.1) rectangle (8.6, 0.9);
            \draw[dashed, rounded corners=1ex] (0.6, 1.6) rectangle (7, 2.4);
            \draw[dashed, rounded corners=1ex] (1.6, 2.6) rectangle (6, 3.4);

            \node[align=right, anchor=east] at (9,2) {literals};
            \node[align=left, anchor=west] at (7,2.2) {$X$};
            \node[align=right, anchor=east] at (9,3) {landmarks};
            \node[align=left, anchor=west] at (6,3.2) {$L$};
            \node[align=left, anchor=west] at (3.2,0.4) {clauses};
            \node[align=left, anchor=west] at (3.2,0.7) {$\Phi$};
            \node[align=right, anchor=east] at (9,1.3) {variables};
            \node[align=left, anchor=west] at (8.6,0.7) {$Y$};
            \node[align=left, anchor=west] at (3.2,-0.2) {
                \footnotesize\shortstack[l]{$n$-$1$=$3$ leaves\vspace*{-0.2em}\\for each $\phi,y$}
            };
            \node[align=right, anchor=east] at (9.1,4) {
                \footnotesize\shortstack[r]{$n(n+m)$\vspace*{-0.2em}\\$=28$ leaves}
            };
        \end{tikzpicture}
        \captionof{figure}{Reduction from \PrbTSATShort to \PrbBNPShort,
        showing an example with $n=4,m=3$: $\phi_1=x_1 \vee \overline{x}_3 \vee \overline{x}_4,
        \phi_2=\overline{x}_1 \vee x_2 \vee \overline{x}_4,
        \phi_3=x_2 \vee x_3 \vee x_4
        $ with solution $x_1=1,x_2=0,x_3=1,x_4=0$.
        The two landmarks ($u_1,u_2$) form a 3-dominating set.}
        \label{fig:nbr-hardness-2}
    \end{minipage}
\end{figure*}

We now turn to the second problem arising in our metagenomics application:
partitioning the vertex set into pieces around a set of $r$-dominators
as evenly as possible.
We first formalize the notion of a \textit{neighborhood partitioning} and
use variance to define its \textit{balance}.
After establishing that the resulting problem is NP-hard
(and remains so under very restrictive conditions),
we show that for radius $1$, a flow-based approach gives a polynomial-time solution.
Finally, we describe several algorithmic approaches for obtaining both exact and heuristic
solutions.

We begin by considering the more general setting where we are given $L$,
a set of \textit{landmarks}\footnote{
    We define landmarks to be any set of vertices $L$ so that every
     $v \in V$ is reachable from at least one $u \in L$.
},
and ask for a partition of $V$ into $|L|$ disjoint sets so that
(a) each piece contains exactly one landmark,
(b) every vertex is assigned to a piece with one of its closest landmarks from $L$ and
(c) for every piece $A$, the induced subgraph on $A$ preserves the distance between
 the landmark and other vertices in $A$.\looseness-1

\begin{definition}
  Given a graph $G=(V,E)$ and landmarks $L = \{u_1,\ldots,u_{\ell}\} \subseteq V$, we say
  $\mathcal{A}=\{A_1,\ldots,A_{\ell}\}$ is a \textbf{neighborhood partitioning} of $G$
  with respect to $L$ if and only if $\mathcal{A}$ is a set partition of $V$ and
  $d_{G[A_i]}(v,u_i) = d_G(v,L)$ for every $1 \leq i \leq \ell$, $v \in A_i$.
\end{definition}

We note that if $L$ is an $r$-dominating set of $G$, the resulting pieces will necessarily
have radius at most $r$, making $G/\mathcal{A}$ an $r$-shallow minor\footnote{
    A graph minor formed by contracting disjoint subgraphs of radius at most $r$.
}.
This is essential for \sgc to maintain efficiency guarantees when
computing the dominating sets in their hierarchy, since it ensures graphs remain
within the assumed sparse class (bounded expansion) \cite{Brown2020}.

We now define the problem of finding a set of pieces whose size distribution is
as even as possible.

\begin{ProblemBox}{\small \PrbBNP (\PrbBNPShort) \normalsize}
    \Input & A graph $G=(V,E)$ and landmarks $L = \{u_1,\ldots,u_{\ell}\} \subseteq V$\\
    \Prob  & Find a neighborhood partitioning $\mathcal{A}=\{A_1,\ldots,A_{\ell}\}$ of $G$
    on $L$ such that the piece-size population variance
    $\text{Var} (\{ \abs{A}: A \in \mathcal{A}\})$ is minimized.
\end{ProblemBox}

A key observation is that since $V$ and $L$ are given to us,
the average size of $\mathcal{A}$ is $\abs{V}/\abs{L}$, which is fixed.
This results in the following equivalence:

\begin{theorem}\label{thm:min-var-square-relation}
    A neighborhood partitioning $\mathcal{A}$ gives the minimum piece-size variance
    if and only if the square sum of the piece-sizes is minimized.
\end{theorem}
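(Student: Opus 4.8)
The plan is to invoke the standard algebraic decomposition of population variance and observe that every quantity in it except the sum of squares is pinned down by the problem instance. Write $\mathcal{A} = \{A_1,\ldots,A_\ell\}$ and $a_i = |A_i|$. The population variance of the multiset $\{a_1,\ldots,a_\ell\}$ is
\[
  \mathrm{Var}(\{a_i\}) \;=\; \frac{1}{\ell}\sum_{i=1}^{\ell} a_i^2 \;-\; \Bigl(\frac{1}{\ell}\sum_{i=1}^{\ell} a_i\Bigr)^{\!2}.
\]
First I would note that, by definition of a neighborhood partitioning, $\mathcal{A}$ is a set partition of $V$, so $\sum_{i} a_i = |V|$, and the number of pieces is $\ell = |L|$; both are determined by the input $(G,L)$ and do not depend on the choice of $\mathcal{A}$. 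Hence the subtracted term $(|V|/\ell)^2$ is a constant across all feasible solutions, and $1/\ell > 0$ is a fixed positive scalar.

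It then follows immediately that for two neighborhood partitionings $\mathcal{A}, \mathcal{A}'$ we have $\mathrm{Var}(\{|A|: A \in \mathcal{A}\}) \le \mathrm{Var}(\{|A|: A \in \mathcal{A}'\})$ if and only if $\sum_{A \in \mathcal{A}} |A|^2 \le \sum_{A \in \mathcal{A}'} |A|^2$, since the map $x \mapsto x/\ell - (|V|/\ell)^2$ is an increasing affine function. In particular, $\mathcal{A}$ minimizes the variance over all neighborhood partitionings of $G$ on $L$ exactly when it minimizes $\sum_{A \in \mathcal{A}} |A|^2$, which is the claim.

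There is essentially no obstacle here beyond making the argument airtight: the only thing to be careful about is confirming that the set of feasible solutions over which we optimize is nonempty and that every member has the same total size and cardinality — both of which are guaranteed because landmarks cover $V$ (so a neighborhood partitioning exists, e.g. by assigning each vertex to a closest landmark along a shortest path) and because a partition of $V$ into $|L|$ parts has, by construction, total size $|V|$. I would state these observations in one sentence and then present the two-line identity-and-monotonicity argument above; no case analysis or further machinery is needed.
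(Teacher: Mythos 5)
Your proof is correct and follows essentially the same route as the paper: both expand the population variance as $\frac{1}{\ell}\sum_i a_i^2 - \mu^2$ and observe that $\mu = |V|/\ell$ is fixed by the instance, so minimizing variance is equivalent to minimizing the sum of squares. Your additional remarks on feasibility and monotonicity are fine but not needed beyond what the paper states.
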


We now establish the hardness of \PrbBNPShort.

\begin{theorem}\label{thm:hardness-bnp}
    \PrbBNPShort is NP-hard.
    It remains hard even if \linebreak $\max_{v\in V}d(v,L) = 2$ or if $|L|=2$.
\end{theorem}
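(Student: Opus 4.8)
The plan is to give two separate polynomial-time many-one reductions: one establishing hardness when $\max_{v\in V}d(v,L)=2$ (reducing from \PrbXTC, exact cover by $3$-sets) and one when $|L|=2$ (reducing from \PrbTSATShort). In both cases I work with the decision version of \PrbBNPShort, ``is there a neighborhood partitioning of variance at most $t$?''; membership in NP is immediate since a partitioning is a polynomial-size certificate whose validity (set partition, one landmark per piece, and the distance-preservation condition~(c)) and variance are both checkable in polynomial time, so proving NP-hardness of this decision problem also yields NP-hardness of the optimization problem. Throughout I invoke Theorem~\ref{thm:min-var-square-relation}: because $|V|$ and $|L|$ are fixed by the instance, minimizing variance is equivalent to minimizing the sum of squared piece sizes, and in particular the variance is $0$ exactly when all pieces have equal size. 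Both reductions are engineered so that the target is variance $0$.

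For the radius-$2$ case, from a \PrbXTC instance with universe $X$, $|X|=3q$, and $3$-set family $\mathcal{C}$, I build a three-layer graph: a landmark layer $L$, a ``set'' layer with one vertex $C$ for each member of $\mathcal{C}$, and an ``element'' layer $X$; $L$ is joined completely to the set layer, and each set vertex is joined to exactly the three element vertices it contains (so every element lies at distance exactly $2$ from $L$ and $L$ is a $2$-dominating set). I then attach a calibrated number of pendant (``leaf'') vertices to the set and element vertices, chosen so that (a) each set vertex together with its pendants must be placed as one unit in a single piece, since those pendants sit at distance $2$ or $3$ from $L$ and can reach a landmark only through their set vertex, and (b) a perfectly balanced partitioning exists iff some subfamily $\mathcal{C}'\subseteq\mathcal{C}$ of size $q$ ``hosts,'' inside its pieces, all $3q$ elements --- which, since each set covers only three elements, forces $\mathcal{C}'$ to be an exact cover. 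The forward direction is immediate from a cover; for the reverse direction one uses condition~(c), namely that an element vertex may be placed only in a piece whose unique set vertex covers it, and argues that the only size profile consistent with both this restriction and exact balance is the one induced by an exact cover.

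For the $|L|=2$ case, from a \PrbTSATShort instance on $n$ variables and $m$ clauses I use two landmarks $u_1,u_2$. For each variable $x_i$ I add literal vertices $x_i$ and $\overline{x}_i$, each adjacent to both landmarks (so at distance $1$ from $L$); a variable-gadget vertex $y_i$ adjacent to $x_i$ and $\overline{x}_i$; and for each clause $\phi_k$ a clause-gadget vertex adjacent to the three literal vertices occurring in $\phi_k$ (these gadget vertices are at distance $2$ from $L$). I hang $n-1$ pendants on each $y_i$ and on each clause gadget, and a bundle of $n(n+m)$ pendants on $u_2$. Counting the total number of vertices and solving the single balance equation $|A_1|=|A_2|$ shows the two pieces can be equalized only if exactly $n$ literal vertices lie in $A_1$ \emph{and} every variable gadget and every clause gadget (each a bundle of size $n$) lies in $A_1$. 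Condition~(c) then forces, for each $i$, at least one of $x_i,\overline{x}_i$ into $A_1$ (so $y_i$ has a length-$2$ route to $u_1$); combined with ``exactly $n$ literals in $A_1$'' this makes it exactly one of the two --- a truth assignment --- and each clause gadget needs one of its literal vertices in $A_1$, i.e.\ the clause has a true literal. Conversely, a satisfying assignment gives a valid, perfectly balanced partitioning by putting the true literals (and all gadgets and pendant bundles) in $A_1$ and the false literals (and $u_2$'s bundle) in $A_2$; here $L$ is a $3$-dominating set, which is irrelevant since this case imposes no radius bound.

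The main obstacle is the reverse direction in each reduction, and inside it the calibration of pendant counts: the leaf multiplicities must be chosen so that (i)~the distance-preservation condition~(c) rules out any ``cheating'' placement of the far vertices, and (ii)~the resulting rigid size accounting makes variance $0$ attainable \emph{only} via an exact cover (resp.\ a satisfying assignment). The $|L|=2$ reduction is the cleaner one because balance is a single scalar constraint; the radius-$2$ reduction needs more care, since many set vertices and pieces interact and one must explicitly rule out a set's three elements being ``hosted'' across several different pieces.
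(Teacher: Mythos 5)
Your second reduction (the $|L|=2$ case) is, gadget for gadget, the paper's own construction --- a biclique between the two landmarks and the $2n$ literal vertices, a vertex $y_i$ adjacent to $x_i,\overline{x}_i$, clause vertices adjacent to their literals, $n-1$ pendants on each $y_i$ and each clause vertex, and $n(n+m)$ pendants on $u_2$ --- and your size-accounting argument for the reverse direction is the paper's as well. That half is correct.

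The radius-$2$ reduction has a genuine gap. First, a concrete error: you attach pendants to the element vertices, and you yourself note that these sit at distance $3$ from $L$. Any such vertex destroys the property $\max_{v\in V}d(v,L)=2$, which is the entire point of this half of the theorem (general NP-hardness already follows from your $|L|=2$ reduction, so a construction with eccentricity $3$ proves nothing new). Second, the calibration you defer is not a routine detail, and your proposed placement of pendants cannot deliver it: pendants hung on a set vertex travel with that set into whatever piece it joins, so they cannot distinguish the $q$ ``covering'' pieces (one set plus its three elements) from the remaining pieces (sets but no elements); with a uniform pendant load these two piece types differ in size by a fixed positive amount and can never both equal the mean. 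You also never fix the number of landmarks, on which the whole balance argument hinges. The paper resolves both issues by putting the leaves on the \emph{landmarks}: it uses $|\mathcal{C}|$ landmarks in a biclique with the set layer, attaches $3$ leaves to $|\mathcal{C}|-q$ of them, and observes that every piece must then have size exactly $5$; the $q$ bare landmarks are each forced to take one set vertex plus three elements (yielding an exact cover), while each leaf-laden landmark absorbs exactly one unused set, and every vertex stays within distance $2$ of $L$. You would need to either adopt that mechanism or specify an alternative calibration satisfying both constraints; as written, neither is in place.
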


We break the proof of Theorem~\ref{thm:hardness-bnp} into two lemmas,
    first showing the case when the landmarks are a 2-dominating-set by
    reduction from \PrbXTC.

    \begin{ProblemBox}{\PrbXTC (\PrbXTCShort)}
        Input: &Set $X$ with $|X|=3q$ and a collection $\mathcal{C}$ of 3-element subsets of
        $X$.\\
        Problem: &Is there a collection $\mathcal{C'} \subseteq \mathcal{C}$ such that
        every element of $X$ occurs in exactly one member of $\mathcal{C'}$?
    \end{ProblemBox}

\begin{lemma}
    \PrbBNPShort is NP-hard when the landmarks are distance at most two from all vertices.
\end{lemma}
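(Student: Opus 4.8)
The plan is to reduce from \PrbXTC. Given an instance $(X,\mathcal{C})$ of \PrbXTCShort with $|X|=3q$, I first handle the degenerate cases in polynomial time: if some element of $X$ lies in no member of $\mathcal{C}$, or if $|\mathcal{C}|<q$, output a fixed no-instance. Otherwise I build the graph $G$ sketched in Figure~\ref{fig:nbr-hardness-1}: create $|\mathcal{C}|$ landmark vertices, of which $q$ are ``plain'' and the remaining $|\mathcal{C}|-q$ are ``padded'' (each padded landmark receives three private pendant leaves); create a set-vertex $c_j$ for each $C_j\in\mathcal{C}$, adjacent to every landmark; and create an element-vertex for each $x\in X$, adjacent to $c_j$ precisely when $x\in C_j$. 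The landmark set $L$ is then at distance at most two from every vertex: set-vertices and leaves are neighbors of landmarks, and every element-vertex sits at distance exactly two (never adjacent to a landmark, but by preprocessing it has a containing set). A routine count gives $|V|=5|\mathcal{C}|=5|L|$, so the average piece size is $5$, and by Theorem~\ref{thm:min-var-square-relation} the minimum variance is $0$ if and only if $G$ admits a neighborhood partitioning in which every piece has size exactly~$5$; this is the decision version I will show equivalent to exact cover.

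Next I record the constraints any neighborhood partitioning $\mathcal{A}$ must satisfy: the three leaves at a padded landmark $u_i$ are reachable only through $u_i$, hence lie in $A_i$; a set-vertex is adjacent to all landmarks and may be placed in any single piece; and an element-vertex $x$ may be placed in $A_i$ only if some $c_j\in A_i$ with $x\in C_j$, since otherwise the distance condition $d_{G[A_i]}(x,u_i)=d_G(x,L)=2$ fails. For the forward direction, given an exact cover $\mathcal{C}'\subseteq\mathcal{C}$ with $|\mathcal{C}'|=q$, assign the $q$ cover-sets bijectively to the plain landmarks (each piece also taking in that set's three element-vertices) and assign the remaining $|\mathcal{C}|-q$ sets bijectively to the padded landmarks, taking in no elements; every piece then has size $5$, and all distance conditions are witnessed by the two-edge paths $x$--$c_j$--$u_i$.

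For the reverse direction --- the crux of the argument --- assume every piece has size $5$. Each padded landmark already holds itself plus three leaves, so it holds exactly one further vertex; that vertex cannot be an element-vertex (which would force a fourth extra vertex, its containing set, giving size at least $6$), so it is a set-vertex, and the padded pieces collect no elements. Hence all $3q$ element-vertices and the remaining $q$ set-vertices are distributed among the $q$ plain landmarks; a plain landmark holding no set-vertex would have size $1$, so each plain landmark holds exactly one set-vertex, and being of size $5$ it must collect exactly three element-vertices, which (a $3$-set has only three elements) are precisely that set's elements. The $q$ distinct sets on the plain landmarks thus cover all $3q$ element-vertices with no repetition, i.e.\ they form an exact cover of $X$. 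The main obstacle is exactly this counting step: one must exclude ``unbalanced but still size-$5$'' placements (a plain landmark hoarding two sets while another starves, stray elements parked on padded landmarks, and so on), and it is the padded-leaf gadget together with the global size count $|V|=5|L|$ that pins the configuration down. Since the construction is polynomial and $L$ is a $2$-dominating set of $G$, this establishes NP-hardness of \PrbBNPShort under the stated restriction.
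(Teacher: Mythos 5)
Your proof is correct and follows essentially the same reduction as the paper's: the identical X3C gadget (a landmark per set, $q$ of them plain and the rest padded with three leaves, a biclique between landmarks and set-vertices, and elements attached to their containing sets), with balance forcing every piece to size $5$ and the same counting argument pinning down an exact cover. Your write-up just makes explicit a few details the paper leaves implicit (the $|V|=5|L|$ count, the preprocessing of elements contained in no set, and the step-by-step exclusion of unbalanced size-$5$ placements).
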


\begin{proof}
    Consider an instance of \PrbXTCShort with $n=|\mathcal{C}|$.
    If $n < q$, then output ``no''.
    Assuming $n \geq q$, we construct a graph $G=(V,E)$ as follows.
    The set of vertices $V$ includes $n$ landmarks
     $L=\{u_i \mid 1 \leq i \leq q\}$, $\mathcal{C}$, $X$, and attached leaves;
     $3$ leaves are attached to each of the $n-q$ vertices in $L$.
    The edges are constructed as follows:
     $L$ and $\mathcal{C}$ form a biclique;
     for each $C \in \mathcal{C}$, $C$ is connected to all elements of $X$ it contains.
    This construction can be done in time $\bigo{n^2}$.
    See Figure~\ref{fig:nbr-hardness-1} for an example.

    We return ``yes'' if and only if \PrbBNPShort returns a partition with pieces of equal size
    (by construction, this must be $5$). Let $L' \subseteq L$ be the landmarks that do not
    have attached leaves. If the given \PrbXTCShort instance admits an exact cover
    $\mathcal{C'}$, then for each $C \in \mathcal{C'}$, we assign one landmark in $L'$ to $C$
    and all elements of $C$. Then, we assign each of $L \setminus L'$ to $n - q$ unused sets
    in $\mathcal{C}$, resulting in pieces of equal sizes.

    Conversely, suppose \PrbBNPShort returns an equally-sized partition. Because each piece
    must have $5$ vertices, $n-q$ sets in $\mathcal{C}$ are assigned to $L \subseteq L'$.
    Also, since $\abs{L'}=q$, each landmark in $L'$ must have exactly one vertex of
    $\mathcal{C}$ in its piece. Then, each piece contains exactly $3$ elements in $X$, and this
    is an exact cover by $3$-sets.\looseness-1
\end{proof}

\begin{lemma}
    \PrbBNPShort is NP-hard when there are two landmarks.
\end{lemma}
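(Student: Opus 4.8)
The plan is to reduce from $3$-satisfiability (\PrbTSATShort). Given a $3$-CNF formula $\Phi$ with variables $x_1,\dots,x_n$ and clauses $\phi_1,\dots,\phi_m$, I build the graph $G$ of Figure~\ref{fig:nbr-hardness-2}: two landmarks $u_1,u_2$; a literal set $X = \{x_i,\overline{x}_i : 1 \le i \le n\}$ with every literal adjacent to both $u_1$ and $u_2$; a variable vertex $y_i$ adjacent to $x_i$ and $\overline{x}_i$; a clause vertex $\phi_j$ adjacent to the three literals occurring in clause $j$; and pendant leaves, namely $n-1$ leaves on each $y_i$ and each $\phi_j$, and $n(n+m)$ leaves on $u_2$. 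A direct count gives $\abs{V} = 2(1+n+n^2+nm)$, so a neighborhood partitioning attains the minimum possible variance $0$ (Theorem~\ref{thm:min-var-square-relation}) exactly when both pieces have size $1+n+n^2+nm$. One checks $L=\{u_1,u_2\}$ is a $3$-dominating set: each literal is at distance $1$ from both $u_i$, each $y_i$ and $\phi_j$ at distance $2$, their leaves at distance $3$, and each $u_2$-leaf at distance $1$ from $u_2$ but $\ge 2$ from $u_1$, so $u_2$ is its \emph{unique} closest landmark. I then claim $G$ admits a balanced neighborhood partitioning iff $\Phi$ is satisfiable.

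For the ``if'' direction, from a satisfying assignment I put into $A_1$ the vertex $u_1$, every $y_i$ and $\phi_j$ with all of their leaves, and the unique literal of each variable that is true; all remaining vertices ($u_2$, its leaves, and the false literals) go to $A_2$. Both pieces then have size $1+n+n^2+nm$. Distance preservation holds because the true literal of $y_i$ lies in $A_1$ adjacent to both $y_i$ and $u_1$, witnessing the length-$2$ path from $y_i$ and (via $y_i$) the length-$3$ paths from its leaves; a satisfied clause $j$ supplies such a literal for $\phi_j$ and its leaves; and the false literals and $u_2$-leaves are at distance $1$ from $u_2$.

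For the ``only if'' direction -- the crux -- suppose a balanced partitioning exists with $u_i \in A_i$. All $n(n+m)$ $u_2$-leaves lie in $A_2$, so $A_2$ contains exactly $n$ further (``gadget'') vertices; these may only be literals, $y_i$'s, $\phi_j$'s, or $y/\phi$-leaves. No $y_i$ or $\phi_j$ can be in $A_2$: if $y_i\in A_2$, distance preservation pulls one of its literals into $A_2$ (to realize a length-$2$ path to $u_2$) and forces all $n-1$ leaves of $y_i$ into $A_2$ (a leaf in $A_1$ would be isolated there, having only the neighbor $y_i\notin A_1$), contributing $\ge n+1 > n$ gadget vertices -- contradiction; the same count excludes $\phi_j\in A_2$. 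Hence $A_1$ contains $u_1$, every $y_i$, every $\phi_j$, and all their leaves, i.e.\ $1+n^2+nm$ vertices, leaving exactly $n$ literals in $A_1$ (and $n$ in $A_2$). Distance preservation at $y_i$ puts at least one of $x_i,\overline{x}_i$ in $A_1$; with $n$ literal slots over $n$ disjoint pairs, exactly one per variable -- a truth assignment. Distance preservation at $\phi_j$ puts one of its literals in $A_1$, making clause $j$ true; so $\Phi$ is satisfiable.

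Since the minimum variance on $G$ is $0$ iff $\Phi$ is satisfiable and the reduction is polynomial, \PrbBNPShort is NP-hard with $\abs{L}=2$. I expect the main obstacle to be calibrating the three leaf-gadget sizes so the counting in the converse is exactly tight: the $u_2$-leaves must leave $A_2$ with precisely $n$ free slots, any displaced $y_i$ or $\phi_j$ must overflow those slots by exactly one, and the $n$ residual literal slots in $A_1$ must force exactly one true literal per variable. The distance-$3$ pendant leaves are what make distance preservation bite, coupling each $y_i$/$\phi_j$ (and a literal certifying its distance to a landmark) into the same piece.
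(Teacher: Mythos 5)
Your proposal is correct and follows essentially the same route as the paper: the identical reduction from 3-SAT with the same gadget (biclique of literals with the two landmarks, variable and clause vertices with $n-1$ pendant leaves each, and $n(n+m)$ leaves on $u_2$), the same size calibration forcing both pieces to have $1+n+n(n+m)$ vertices, and the same counting/connectivity argument in the converse direction. Your write-up of the ``only if'' direction is somewhat more explicit about why no $y_i$ or $\phi_j$ can land in $u_2$'s piece, but the underlying argument is the one the paper gives.
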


\begin{proof}
    We reduce from \PrbSAT (\PrbSATShort).
    Let $x_1,\ldots,x_n$ be the variables and $\phi_1,\ldots,\phi_m$ be the clauses appearing in
    a \PrbSATShort instance. We construct a graph $G=(V,E)$ as follows.
    The set of vertices $V$ includes two landmarks $L=\{u_1, u_2\}$,
     $2n$ vertices $X=\{x_1,\overline{x}_1,\ldots,x_n,\overline{x}_n\}$
     representing \PrbSATShort literals,
     $n$ variables $Y=\{y_1,\ldots,y_n\}$,
     $m$ clauses $\Phi=\{\phi_1,\ldots,\phi_m\}$,
     and attached leaves.
    A total of $n-1$ leaves are attached to each of $Y$ and $\Phi$,
     and $n(n+m)$ leaves are attached to $u_2$.
    The set of edges $E$ is constructed as follows:
     $L$ and $X$ form a biclique;
     each $y_i \in Y$, is connected to $x_i$ and $\overline{x}_i$;
     each $\phi_j \in \Phi$ is connected to all literals in clause $\phi_j$.
    This construction can be done in time $\bigo{n(n+m)}$.
    See Figure~\ref{fig:nbr-hardness-2} for an example.

    Let $\{A, B\}$ be a partition such that $u_1 \in A$ and $u_2 \in B$.
    Note that in any neighborhood partitioning, each piece must be connected,
    so $B$ must contain all $n(n+m)$ of the leaves attached to it. Likewise, if a piece includes $Y_i$ or $\Phi_j$, it must also include their attached leaves.

    We return ``yes'' if \PrbBNPShort returns a partition of equal sizes.
    If the given \PrbSATShort instance is satisfiable, then we include $Y$, $\Phi$, and all true
    literals of $X$ in $A$. $B$ includes all other vertices, and $\abs{A}=\abs{B}=1+n+n(n+m)$.
    Conversely, if \PrbBNPShort returns an equally-sized partition, then $B$ cannot include
    any of $\Phi$ or $Y$, as this forces $\abs{B} > 1+n+n(n+m) > \abs{A}$.
    Since $Y \subseteq A$
    and  $\abs{X \cap A}=n$, exactly one of $\{x_i, \overline{x}_i\}$ is in $A$ for every $i$.
    Thus, in order for $\Phi$ to be a subset of $A$, $X \cap A$ must be a satisfying assignment for the \PrbSATShort instance.
\end{proof}

However, if $\max_{v\in V}d(v,L) \leq 1$
 (equivalently, $L$ is a ($1$-)dominating set of $G$),
then \PrbBNPShort becomes tractable.

\begin{theorem}\label{thm:runtime-bnp}
    \PrbBNPShort can be solved in $\bigo{|L|n^3}$ if\linebreak $\max_{v\in V}d(v,L) \leq 1$.
\end{theorem}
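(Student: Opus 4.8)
The plan is to show that when $L$ is a dominating set the conditions defining a neighborhood partitioning collapse to a simple assignment constraint, and then to solve the resulting minimum-variance assignment by a minimum-cost flow computation with a convex separable cost. First I would observe that if $\max_{v\in V}d(v,L)\le 1$ then $d_G(v,L)\in\{0,1\}$ for every $v$: a landmark $u_i$ has $d_G(u_i,L)=0$ and hence must lie in its own piece $A_i$, while every non-landmark $v$ has $d_G(v,L)=1$. Consequently a set partition $\mathcal{A}=\{A_1,\dots,A_\ell\}$ with $u_i\in A_i$ is a neighborhood partitioning if and only if every non-landmark $v$ is placed in some $A_i$ with $u_i\in N(v)\cap L$; the distance-preservation requirement is then automatic, since the edge $u_iv$ survives in $G[A_i]$, giving $d_{G[A_i]}(v,u_i)=1=d_G(v,L)$. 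Thus the feasible solutions are exactly the ways of assigning each of the $n-\ell$ non-landmark vertices to one of its adjacent landmarks (and the resulting pieces are automatically connected and nonempty).

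By Theorem~\ref{thm:min-var-square-relation} it suffices to minimize $\sum_i\abs{A_i}^2$. Writing $\abs{A_i}=1+k_i$ where $k_i$ is the number of non-landmarks assigned to $u_i$, and using that $\sum_i k_i=n-\ell$ is fixed, this equals $\ell+2(n-\ell)+\sum_i k_i^2$, so the objective reduces to minimizing $\sum_i k_i^2=\sum_i\sum_{j=1}^{k_i}(2j-1)$. I would then build the following network: a source $s$ with a unit-capacity, zero-cost arc to each non-landmark vertex; a unit-capacity, zero-cost arc from each non-landmark $v$ to every $u_i\in N(v)\cap L$; and from each $u_i$ a family of parallel unit-capacity arcs to the sink $t$, the $j$-th of cost $2j-1$. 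We seek a minimum-cost integral flow of value $n-\ell$. Since $L$ dominates, a feasible flow of this value exists (route any greedy assignment); since the $n-\ell$ source arcs each have capacity $1$, every flow of value $n-\ell$ saturates all of them and thus encodes a complete assignment; and an integral minimum-cost flow exists because all capacities are integral.

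The correctness core is a convexity argument: because the costs $2j-1$ leaving a landmark are increasing in $j$, a flow sending $k_i$ units into $u_i$ minimizes its own contribution by using the $k_i$ cheapest of those arcs, contributing exactly $k_i^2$; hence the cost of a minimum-cost flow of value $n-\ell$ equals $\min_{\mathcal{A}}\sum_i k_i^2$, and reading off which vertex-to-landmark arcs carry flow yields an optimal neighborhood partitioning (conversely any feasible partitioning induces a feasible flow of the same value and matching cost). I would phrase this as a short lemma and verify both directions. For the running time, the network has $\bigo{n}$ nodes and $\bigo{n^2}$ arcs, and there are $n-\ell$ unit augmentations; running successive shortest paths with node potentials --- valid since all original costs are nonnegative, so residual reduced costs stay nonnegative and each augmentation is a single Dijkstra-type shortest-path computation --- keeps the total within the claimed $\bigo{|L|n^3}$ bound, which also matches the specialization of \AlgPrtLayer to $r=1$. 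I expect the main obstacle to be making the convex-cost correspondence precise --- in particular, arguing that a minimum-cost flow necessarily fills the cheapest slots of each landmark, so that flow cost and $\sum_i\abs{A_i}^2$ agree up to the fixed additive constant $\ell+2(n-\ell)$ --- while the constraint simplification and the network construction itself are routine.
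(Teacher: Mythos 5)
Your proposal is correct and follows essentially the same route as the paper: the paper factors the argument through an explicitly named intermediate problem (\PrbSBAShort, Theorem~\ref{thm:runtime-bap}) and places the parallel unit-capacity arcs of cost $2j-1$ on the source side rather than the sink side, but the core reduction --- collapse the distance constraints to ``assign each non-landmark to an adjacent landmark,'' invoke Theorem~\ref{thm:min-var-square-relation}, and encode $\sum_i k_i^2 = \sum_i\sum_{j=1}^{k_i}(2j-1)$ via a min-cost flow with convex parallel-arc costs --- is identical. The only cosmetic difference is that the paper treats landmarks as tasks too (so piece sizes appear directly rather than via the additive constant $\ell+2(n-\ell)$), which changes nothing.
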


Our proof relies on Theorem~\ref{thm:min-var-square-relation} and the fact that the
problem is equivalent to \PrbSBA (\PrbSBAShort) which can be transformed into an instance of the maximum-cost minimum flow problem
(formal definition of \PrbSBAShort and proof in Appendix~\ref{sec:nbr-detail}).

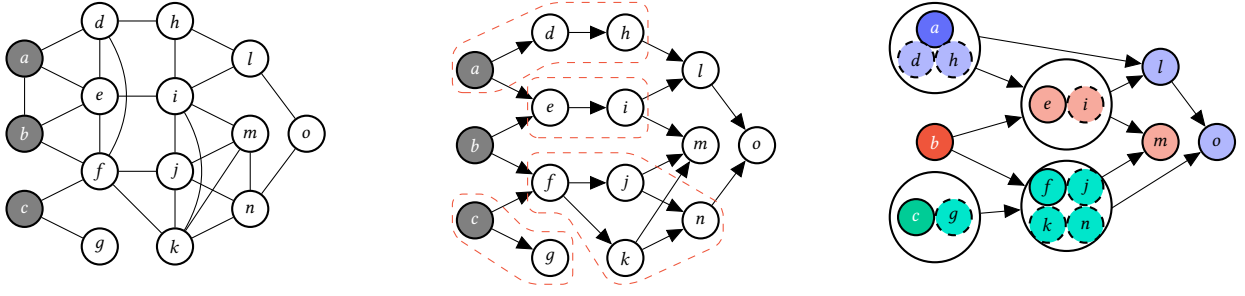
\begin{figure*}[ht]
    \centering

    \pgfdeclarelayer{bg}
    \pgfsetlayers{bg, main}

    \tikzstyle{bnode} = [circle, fill=gray, text=white, draw, thick, scale=0.8, minimum size=0.6cm, inner sep=1.5pt]
    \tikzstyle{wnode} = [circle, fill=white, text=black, draw, thick, scale=0.8, minimum size=0.6cm, inner sep=1.5pt]
    \tikzstyle{lnode} = [circle, fill=white, text=black, draw, thick, scale=0.8, minimum size=1.5cm, inner sep=1.5pt]

    \tikzstyle{directed} = [arrows=- triangle 45]

    \definecolor{color1}{RGB}{99,110,250}
    \definecolor{color4}{RGB}{177,183,253}
    \definecolor{color2}{RGB}{239,85,59}
    \definecolor{color5}{RGB}{247,170,157}
    \definecolor{color3}{RGB}{0,204,150}
    \definecolor{color6}{RGB}{0,230,203}

    \begin{minipage}[t]{.33\linewidth}
        \vspace{0pt}
        \centering
        \begin{tikzpicture}
            \node[bnode] (a) at (0, 2.5) {\textbf{$a$}};
            \node[bnode] (b) at (0, 1.5) {\textbf{$b$}};
            \node[bnode] (c) at (0, 0.5) {\textbf{$c$}};
            \node[wnode] (d) at (1, 3) {$d$};
            \node[wnode] (e) at (1, 2) {$e$};
            \node[wnode] (f) at (1, 1) {$f$};
            \node[wnode] (g) at (1, 0) {$g$};
            \node[wnode] (h) at (2, 3) {$h$};
            \node[wnode] (i) at (2, 2) {$i$};
            \node[wnode] (j) at (2, 1) {$j$};
            \node[wnode] (k) at (2, 0) {$k$};
            \node[wnode] (l) at (3, 2.5) {$l$};
            \node[wnode] (m) at (3, 1.5) {$m$};
            \node[wnode] (n) at (3, 0.5) {$n$};
            \node[wnode] (o) at (3.75, 1.5) {$o$};

            \draw (a) -- (b);
            \draw (d) -- (e); \draw (e) -- (f); \draw (d) to [out=-60,in=60] (f);
            \draw (h) -- (i); \draw (i) -- (j); \draw (j) -- (k);
            \draw (m) -- (n);
            \draw (a) -- (d); \draw (a) -- (e);
            \draw (b) -- (e); \draw (b) -- (f);
            \draw (c) -- (f); \draw (c) -- (g);
            \draw (d) -- (h);
            \draw (e) -- (i);
            \draw (f) -- (j);
            \draw (f) -- (k);
            \draw (h) -- (l);
            \draw (i) -- (l); \draw (i) -- (m); \draw (i) to [out=-60,in=60] (k);
            \draw (j) -- (m); \draw (j) -- (n);
            \draw (k) -- (m);
            \draw (k) -- (n);
            \draw (l) -- (o);
            \draw (n) -- (o);
        \end{tikzpicture}
    \end{minipage}%
    \begin{minipage}[t]{.34\linewidth}
        \vspace{0pt}
        \centering
        \begin{tikzpicture}
            \node[bnode] (a) at (0, 2.5) {\textbf{$a$}};
            \node[bnode] (b) at (0, 1.5) {\textbf{$b$}};
            \node[bnode] (c) at (0, 0.5) {\textbf{$c$}};
            \node[wnode] (d) at (1, 3)           {$d$};
            \node[wnode] (e) at (1, 2)           {$e$};
            \node[wnode] (f) at (1, 1)           {$f$};
            \node[wnode] (g) at (1, 0)           {$g$};
            \node[wnode] (h) at (2, 3)           {$h$};
            \node[wnode] (i) at (2, 2)           {$i$};
            \node[wnode] (j) at (2, 1)           {$j$};
            \node[wnode] (k) at (2, 0)           {$k$};
            \node[wnode] (l) at (3, 2.5)         {$l$};
            \node[wnode] (m) at (3, 1.5)         {$m$};
            \node[wnode] (n) at (3, 0.5)         {$n$};
            \node[wnode] (o) at (3.75, 1.5)      {$o$};

            \draw[directed] (a) -- (d); \draw[directed] (a) -- (e);
            \draw[directed] (b) -- (e); \draw[directed] (b) -- (f);
            \draw[directed] (c) -- (f); \draw[directed] (c) -- (g);
            \draw[directed] (d) -- (h);
            \draw[directed] (e) -- (i);
            \draw[directed] (f) -- (j);
            \draw[directed] (f) -- (k);
            \draw[directed] (h) -- (l);
            \draw[directed] (i) -- (l); \draw[directed] (i) -- (m);
            \draw[directed] (j) -- (m); \draw[directed] (j) -- (n);
            \draw[directed] (k) -- (m);
            \draw[directed] (k) -- (n);
            \draw[directed] (l) -- (o);
            \draw[directed] (n) -- (o);

            \draw[dashed, rounded corners, color=color2] (-0.3, 2.2) -- (0.3,2.2) -- (1, 2.6) -- (2.3, 2.6) -- (2.3, 3.4) -- (1, 3.4) -- (-0.3, 2.8) -- cycle;
            \draw[dashed, rounded corners, color=color2] (0.7, 1.6) -- (2.3, 1.6) -- (2.3, 2.4) -- (0.7, 2.4) -- cycle;
            \draw[dashed, rounded corners, color=color2] (0.7, 0.6) -- (1.2, 0.6) -- (1.8, -0.35) -- (2.1, -0.35) -- (3.3, 0.2) -- (3.3, 0.7) -- (2.1, 1.4) -- (0.7, 1.4) -- cycle;
            \draw[dashed, rounded corners, color=color2] (-0.3, 0.3) -- (0.7, -0.35) -- (1.3, -0.35) -- (1.3, 0.35) -- (0.7, 0.35) -- (0.2, 0.85) -- (-0.3, 0.85) -- cycle;
        \end{tikzpicture}
    \end{minipage}%
    \begin{minipage}[t]{.33\linewidth}
        \vspace{0pt}
        \centering
        \begin{tikzpicture}
            \node[lnode] (aa) at (0, 2.75) {};
            \node[lnode] (cc) at (0, 0.5) {};
            \node[lnode] (ee) at (1.75, 2) {};
            \node[lnode] (ff) at (1.75, 0.65) {};

            \node[bnode, fill=color1] (a) at (0, 3.0)     {\textbf{$a$}};
            \node[wnode, fill=color4, dashed] (d) at (-0.25, 2.6) {$d$};
            \node[wnode, fill=color4, dashed] (h) at (0.25, 2.6)  {$h$};

            \node[bnode, fill=color2] (b) at (0, 1.5) {\textbf{$b$}};

            \node[bnode, fill=color3] (c) at (-0.25, 0.5) {\textbf{$c$}};
            \node[wnode, fill=color6, dashed] (g) at (0.25, 0.5)  {$g$};

            \node[wnode, fill=color5] (e) at (1.5, 2)       {$e$};
            \node[wnode, fill=color5, dashed] (i) at (2, 2) {$i$};

            \node[wnode, fill=color6] (f) at (1.5, 0.9)         {$f$};
            \node[wnode, fill=color6, dashed] (j) at (2, 0.9)   {$j$};
            \node[wnode, fill=color6, dashed] (k) at (1.5, 0.4) {$k$};
            \node[wnode, fill=color6, dashed] (n) at (2, 0.4)   {$n$};

            \node[wnode, fill=color4] (l) at (3, 2.5) {$l$};
            \node[wnode, fill=color5] (m) at (3, 1.5) {$m$};

            \node[wnode, fill=color4] (o) at (3.75, 1.5) {$o$};

            \begin{pgfonlayer}{bg}
                \draw[directed] (h) -- (ee);
                \draw[directed] (b) -- (ee); \draw[directed] (b) -- (ff);
                \draw[directed] (cc) -- (ff);

                \draw[directed] (a) -- (l);
                \draw[directed] (i) -- (l); \draw[directed] (i) -- (m);
                \draw[directed] (j) -- (m);

                \draw[directed] (l) -- (o);
                \draw[directed] (n) -- (o);
            \end{pgfonlayer}
        \end{tikzpicture}
    \end{minipage}%
    \caption{%
    Visualization of neighborhood kernels.
    Given a graph $G$ with three landmarks $a, b, c$ (left),
     the neighborhood kernel $H$ of $G$ (center) identifies groups dotted in red
     which must all be assigned to the same landmark.
    At right, the compact neighborhood kernel of $G$ has one representative vertex
     (solid border) per bag;
     colors indicate an optimally balanced neighborhood partitioning.
    }
    \label{fig:nbr-kernel}
\end{figure*}

\subsection{Algorithms}
We now present several algorithms for computing neighborhood partitions
along with a quadratic programming formulation for \PrbBNPShort.
In all cases, we work with a preprocessed instance which we refer to as
a \textit{(compact) neighborhood kernel} which can be computed in linear time.
Our first two algorithms apply greedy strategies and are complemented by an exact
branch-and-bound algorithm based on ideas from an FPT algorithm on bounded-degree instances.
We defer some proofs of correctness and running time analyses to
Appendix~\ref{sec:alg-details}.

\subsubsection{Neighborhood Kernels}

By the definition of \PrbBNPShort, each non-landmark $v$ in a piece $A$ has a path of
length $d(v,L)$ to the landmark in $A$.
Thus, given a graph $G=(V,E)$ and landmarks $L \subseteq V$,
if an edge $e \in E$ is not in any shortest $v$-$L$ paths for all $v \in V$,
then we can safely remove $e$ from the original graph.
The underlying ideas of our preprocessing are orienting edges outward from the landmarks,
creating layers of vertices by distance from the closest landmark,
and removing unnecessary edges.
We call the output of this preprocessing a \textit{neighborhood kernel}.

\begin{definition}[Neighborhood Kernel]\label{def:nbr-knl}
    Given a graph $G=(V,E)$ and landmarks $L \subseteq V$,
    a \textbf{neighborhood kernel} $H$ is a digraph with vertex set $V$ and
    edge set $E'=\{(v,w): vw\in E,\ \ d(v,L) + 1 = d(w,L)\}$.
\end{definition}

By construction, for every non-landmark vertex $v$ in a neighborhood kernel,
 all shortest paths from $v$ to any landmark must include one of $v$'s in-neighbors in $H$,
 and thus $v$ must be in the same piece as one of its in-neighbors.
If $v$ has only one in-neighbor $w$, then $v$ and $w$ must be assigned to the same piece.
We encapsulate this idea in the following data reduction, noting that
 by definition, all landmarks are kept in the compact neighborhood kernel.
 \looseness-1

\begin{definition}[\small Compact Neighborhood Kernel\normalsize]\label{def:compact-nbr-knl}
    Given a neighborhood kernel $H$ for a graph $G=(V,E)$ and landmarks $L \subseteq V$,
     a \textbf{compact neighborhood kernel} $(H_c, \phi)$ is a pair consisting of
      a digraph $H_c=(V_c,E_c):=H/\mathcal{A}$ %
      and a map $\phi: V_c \to 2^V$
      defining its ``bags'',
      where the collection of bags $\mathcal{A}:=\{\phi(v): v \in V_c\}$
      is a partition of $V$.
     Additionally, we require the following conditions:
\begin{packed_item}
    \item Each vertex $v \in V_c$ is the representative for its bag,
     and must be the closest to $L$ in $G$ among $\phi(v)$.
    \item All bag members must be assigned to the same landmark in
     any valid neighborhood partitioning.
    \item $\phi(v)$ is maximal subject to these conditions.
\end{packed_item}
\end{definition}

We visualize this process in Figure~\ref{fig:nbr-kernel}.
In Appendix~\ref{sec:alg-details}, we describe
algorithms (\AlgNbrKnl and \AlgCmpNbrKnl) for
creating (compact) neighborhood kernels in $\bigo{n + m}$ time.
Finally, we note that the maximum degree cannot increase under these transformations
(Lemma~\ref{lem:compact-nbr-knl-in-degree}).
We use this fact to bound the running time of our branch-and-bound algorithm (Section~\ref{sec:branch-algorithm}).

\subsubsection{Heuristic Algorithms}

We developed two heuristics for \PrbBNPShort.
\AlgPrtWeight is a linear-time $\bigo{n + m}$ greedy algorithm that works
on the compact neighborhood kernel.
From landmarks, it traverses all bags in BFS order,
 assigning each bag to the smallest piece among viable candidates.

\begin{theorem}
    \AlgPrtWeight gives a valid neighborhood partitioning in $\bigo{n+m}$ time.
\end{theorem}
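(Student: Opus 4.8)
The plan is to verify separately the two assertions: that the partition $\mathcal{A}$ returned by \AlgPrtWeight is a genuine neighborhood partitioning of $G$ with respect to $L$, and that \AlgPrtWeight (together with the construction of the compact neighborhood kernel it operates on) runs in $\bigo{n+m}$ time. Throughout I will use the structural properties of the (compact) neighborhood kernel $(H_c,\phi)$: that $H$, and hence $H_c$, is acyclic and layered by distance to $L$; that within any bag $B=\phi(b)$ with representative $b$ every member $x$ is reachable from $b$ along a directed path of $H$ of length $d_G(x,L)-d_G(b,L)$ lying entirely inside $B$; and that every arc of $H_c$ entering $B$ does so at its representative (equivalently, only the representative of a bag can have in-neighbors outside the bag). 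These are consequences of how bags are formed by contracting forced vertices and are established alongside \AlgNbrKnl and \AlgCmpNbrKnl in Appendix~\ref{sec:alg-details}.

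\textbf{Validity.} First I would show \AlgPrtWeight assigns every bag to exactly one piece. Each of the $|L|$ landmarks is the representative of its own bag (landmarks are never contracted together), and \AlgPrtWeight seeds one piece per landmark bag. For a non-landmark bag $B$ with representative $b$, the distance $d:=d_G(b,L)\ge 1$ is finite since $L$ is a set of landmarks, so $b$ lies on a shortest $b$-$L$ path whose penultimate vertex $w$ satisfies $wb\in E$ and $d_G(w,L)=d-1$; as $w$ is strictly closer to $L$ than $b$, it lies in a different bag, so $B$ has an in-neighbor in $H_c$ whose representative is strictly closer to $L$. Because \AlgPrtWeight traverses bags in BFS order from the landmark bags, every bag is processed only after all of its $H_c$-in-neighbors; hence each non-landmark bag has a nonempty set of already-assigned viable candidate landmarks and is placed into exactly one piece. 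Thus $\mathcal{A}$ partitions $V$ into exactly $|L|$ pieces, each containing exactly one landmark.

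\textbf{Distance preservation.} This is the crux. I would prove by induction on $d=d_G(v,L)$ that whenever \AlgPrtWeight places a vertex $v$ in the piece $A_i$ of landmark $u_i$, the induced subgraph $G[A_i]$ contains a $u_i$-$v$ path of length exactly $d$; since no shorter $u_i$-$v$ path can exist in $G[A_i]\subseteq G$ (as $d_G(v,u_i)\ge d_G(v,L)=d$), this gives $d_{G[A_i]}(v,u_i)=d_G(v,L)$. The base case $d=0$ is trivial. For the step, write $v\in B$ with representative $b$ and $d_b:=d_G(b,L)\le d$. \AlgPrtWeight assigned $B$ to $A_i$ because some $H_c$-in-neighbor bag $B'$ of $B$ was already assigned to $A_i$; by the bag-entry property the witnessing arc of $H$ enters $B$ at $b$, so there is $w'\in B'$ with $w'b\in E$ and $d_G(w',L)=d_b-1$. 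As $B'$'s representative is strictly closer to $L$ than $b$, the induction hypothesis gives a $u_i$-$w'$ path of length $d_b-1$ in $G[A_i]$; appending $w'b$ yields a $u_i$-$b$ path of length $d_b$ in $G[A_i]$. Concatenating with the in-bag directed path from $b$ to $v$ (length $d-d_b$, contained in $B\subseteq A_i$) gives a $u_i$-$v$ walk of length $d$ in $G[A_i]$, completing the induction and establishing that $\mathcal{A}$ is a neighborhood partitioning.

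\textbf{Running time and main obstacle.} Building $(H_c,\phi)$ costs $\bigo{n+m}$ by \AlgNbrKnl and \AlgCmpNbrKnl, and $H_c$ has at most $n$ vertices and at most $m$ arcs. Maintaining an array of current piece sizes indexed by landmark, the BFS over $H_c$ visits each bag once; for a bag $\phi(b)$ we scan its incoming arcs to collect the candidate landmarks, select one with the smallest current piece size, and increase that size by $|\phi(b)|$, for total work $\bigo{\sum_B(1+\text{indeg}(B))}=\bigo{n+m}$ (the sum ranging over all bags), while emitting the explicit vertex partition from the bags takes $\bigo{n}$. Hence \AlgPrtWeight runs in $\bigo{n+m}$. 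I expect the only genuinely delicate point to be the distance-preservation induction, where the bag-entry and in-bag-reachability properties of the kernel must be invoked at exactly the right moments; the remaining steps are routine bookkeeping.
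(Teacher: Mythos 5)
Your proof is correct and follows essentially the same route as the paper's: induction on distance from the landmarks, using the BFS processing order to guarantee that every bag has an already-assigned in-neighbor (hence a nonempty set of viable candidates), plus the standard piece-size-array accounting for the $\bigo{n+m}$ bound. The only difference is one of detail — you explicitly verify the distance-preservation condition $d_{G[A_i]}(v,u_i)=d_G(v,L)$ via the bag-entry and in-bag-path properties of the compact kernel, a step the paper's proof compresses into the phrase ``meets all requirements.''
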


\begin{proof}
    We proceed by induction on distance from a landmark.
    As the base case, landmarks assign themselves and their bag members,
    which is the only valid option. For the inductive step, consider when $v$ is assigned.
    Because bags are processed in BFS order, $v$'s in-neighbors
    must have already been assigned to a valid landmark.
    Thus, extending the current assignment with any choice at $v$
    meets all requirements, and \AlgPrtWeight produces a valid partitioning.

    The running time is $\bigo{n+m}$ because
    this algorithm performs BFS, and a landmark's piece size
    can be updated in $\bigo{1}$.
\end{proof}


\AlgPrtLayer is a polynomial-time $\bigo{\abs{L} n^3}$ algorithm that is exact
when $L$ is a ($1$-)dominating set and heuristic otherwise.
It works on the neighborhood kernel and solves \PrbSBAShort at each layer starting
 from the one closest to landmarks
 (a layer $V_i$ is the set of vertices such that
 the distance to the closest landmark is $i$).

\begin{theorem}\label{thm:prt-layer-analysis}
    \AlgPrtLayer gives a neighborhood partitioning in $\bigo{\abs{L}n^3}$ time.
    Further, it gives an exact solution to \PrbBNPShort when $L$ is a dominating set of $G$.
\end{theorem}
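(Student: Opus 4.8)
The plan is to establish the three assertions in turn: that \AlgPrtLayer always outputs a valid neighborhood partitioning, that it runs in $\bigo{\abs{L}n^3}$ time, and that this output is optimal for \PrbBNPShort whenever $L$ dominates $G$. Recall that \AlgPrtLayer works on the neighborhood kernel $H$, whose layers $V_0=L, V_1, V_2, \ldots$ partition $V$ by distance to $L$ and in which every arc runs from some $V_{i-1}$ to $V_i$; the algorithm processes these layers in increasing order of $i$, solving one \PrbSBAShort instance on $V_i$ against the piece sizes accumulated from $V_0,\ldots,V_{i-1}$.

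For validity I would induct on the layer index, maintaining the invariant that once $V_i$ has been assigned, every assigned vertex $x$ sits in a piece $A$ whose landmark $u$ is joined to $x$ by a path of length $d_G(x,L)$ lying entirely in $A$. The base case $i=0$ is immediate: landmarks have no in-neighbours in $H$ and each forms its own piece. For the step, every $v\in V_i$ has all of its in-neighbours in $V_{i-1}$, hence already assigned, and \PrbSBAShort places $v$ in the piece $A$ of one such in-neighbour $w$; prepending the arc $wv$ to the length-$(i{-}1)$ in-piece path guaranteed for $w$ yields the required length-$i$ path for $v$ inside $A$. Because $d_{G[A]}(x,u)\ge d_G(x,u)\ge d_G(x,L)$ holds unconditionally, the invariant pins down $d_{G[A]}(x,u)=d_G(x,L)$ even for the final partition (later layers neither erase the witnessing path nor lower the distance), which is condition~(c) of a neighborhood partitioning; it also shows $u$ is a closest landmark of $x$, which is condition~(b), and condition~(a) is clear since only layer~$0$ contributes landmarks.

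For the running time, there are at most $\max_{v\in V}d(v,L)+1 \le n$ layers, each handled by a single \PrbSBAShort instance. Invoking the reduction to maximum-cost minimum-flow from Appendix~\ref{sec:nbr-detail} — a network on $\bigo{\abs{L}+n_i}$ nodes and $\bigo{\abs{L}n_i}$ arcs, where $n_i:=\abs{V_i}$ and the per-piece arcs carry convex costs encoding the marginal contribution of newly added vertices to the square sum of piece sizes, through which $n_i$ units of flow are routed — each layer is processed in $\bigo{\abs{L}n_i^3}$ time. Since $\sum_i n_i = n$ gives $\sum_i n_i^3 \le n^3$, summing over layers yields the claimed $\bigo{\abs{L}n^3}$ bound (and in particular recovers Theorem~\ref{thm:runtime-bnp} in the single-layer case).

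Finally, optimality when $L$ dominates $G$: then $\max_{v}d(v,L)\le 1$, so $H$ has only the layers $V_0=L$ and $V_1=V\setminus L$, and — since condition~(c) is automatic at distance $1$ — a neighborhood partitioning is exactly an assignment of each $v\in V_1$ to an adjacent landmark. After the forced layer-$0$ step, \AlgPrtLayer solves one \PrbSBAShort instance on $V_1$, which by construction minimizes the square sum of the final piece sizes over precisely this feasible set; by Theorem~\ref{thm:min-var-square-relation} that is the same as minimizing piece-size variance, so the solution is optimal. The myopia that makes \AlgPrtLayer merely heuristic for $r\ge 2$ — fixing layer $i$ before seeing its influence on layer $i{+}1$ — cannot arise, as there is only one non-trivial layer to decide. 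I expect the validity argument to be the main obstacle: one must reconcile a procedure that reasons only inside the pruned kernel $H$ with the distance-preservation condition on the original graph $G$, which works because kernel arcs are exactly the shortest-path arcs toward $L$ and the bound $d_{G[A]}(x,u)\ge d_G(x,L)$ always holds, so exhibiting any in-piece path of the right length is enough.
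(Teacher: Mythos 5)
Your proposal is correct and follows essentially the same route as the paper: validity via processing layers in level order so that each vertex joins the piece of an already-assigned in-neighbor in the kernel (which forces the in-piece distance to equal $d_G(v,L)$), running time by summing the per-layer \PrbSBAShort costs $\sum_i \bigo{\abs{L}\abs{V_i}^3} = \bigo{\abs{L}n^3}$, and optimality in the dominating-set case by reducing to a single \PrbSBAShort instance as in Theorem~\ref{thm:runtime-bnp}. Your explicit induction on layers merely spells out what the paper states more tersely.
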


\begin{proof}
    If each assigned piece is connected in the neighborhood kernel,
     then it is a valid neighborhood partitioning
     because vertices $V_i$ at layer $i$ are distance $i$ away from their closest landmarks,
     both in the original graph and their assigned piece.
    The algorithm examines vertices in level order emanating from the landmarks,
     and by the formulation of \PrbSBAShort, each of the resulting pieces must be connected.
    Also, from Theorem~\ref{thm:runtime-bnp},
    this algorithm gives an exact solution when $L$ is a dominating set.

    The running time (other than constructing a neighborhood kernel)
    is the sum of the time taken for \PrbSBAShort at each layer.
    At layer $i$, only the vertices in $V_i$ have choices
     (all vertices in earlier layers already have an assigned landmark).
    In a flow problem, we can efficiently preprocess already assigned vertices,
    thus resulting in time $\bigo{\abs{L}\abs{V_i}^3}$. Because of the fact
    $V=\bigcup_i{V_i}$, the total running time is
    $\sum_{i}{\bigo{\abs{L}\abs{V_i}^3}} = \bigo{\abs{L}n^3}$.
\end{proof}

\subsubsection{Branch-and-bound Algorithm}
\label{sec:branch-algorithm}

Before elaborating on our exact algorithm for \PrbBNPShort,
we show the following motivating result.

\begin{theorem}\label{thm:bnp-fpt}
    \PrbBNPShort is fixed parameter tractable (FPT) in graphs of bounded degree
    parameterized by $k$, the number of vertices equidistant from multiple landmarks.
\end{theorem}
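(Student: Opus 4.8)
The plan is to exhibit a branch-and-bound procedure (\AlgPrtBranch) whose recursion branches only on the vertices that are equidistant from multiple landmarks, with at most $\Delta$ options at each, producing at most $\Delta^{k}$ leaves; since each leaf is evaluated in polynomial time, this yields a $\Delta^{O(k)}n^{O(1)}$ algorithm, hence FPT in $k$ once the maximum degree is bounded. Two facts drive the argument: a reformulation of valid neighborhood partitionings in terms of the neighborhood kernel, and a structural lemma pinning down exactly which vertices have a genuine choice.

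First I would establish the reformulation. Orient $G$ into its neighborhood kernel $H$ (Definition~\ref{def:nbr-knl}). I claim valid neighborhood partitionings of $G$ with respect to $L$ correspond bijectively to \emph{parent maps} $p$ assigning each non-landmark $v$ one of its in-neighbors in $H$, where the pieces are the trees of the resulting forest rooted at $L$. Indeed, in any piece $A$ with landmark $u$, the first edge of a shortest $v$--$u$ path inside $G[A]$ is a kernel edge, which produces a parent map; conversely, for any parent map the forest path from $v$ to its root has length exactly $d_G(v,L)$ (each step decreases the distance-to-$L$ layer by one), which both witnesses $d_{G[A]}(v,u_i)=d_G(v,L)$ and shows the resulting partitioning is valid. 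Thus every parent map is admissible, and by Theorem~\ref{thm:min-var-square-relation} our objective is $\sum_{A}\abs{A}^2$.

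Next, the structural lemma. For $v$ at layer $i=d(v,L)$ set $C(v)=\{u\in L:d(v,u)=i\}$; the equidistant vertices are exactly those with $\abs{C(v)}\ge 2$, and there are $k$ of them. The key claim is: for any parent map and any in-neighbor $w$ of $v$ in $H$, the landmark rooting $w$'s tree lies in $C(v)$, because that root is at distance exactly $i-1$ from $w$, hence at distance at most $i$, hence exactly $i$, from $v$. Consequently, if $\abs{C(v)}=1$ then $\pi(v)$, the landmark of $v$'s piece, equals the unique element of $C(v)$ in \emph{every} valid partitioning, so it is forced. Since $\pi(v)=\pi(p(v))$, working upward from the lowest layer shows that once the parent map is fixed on the $k$ equidistant vertices, $\pi$ is determined on all of $V$, and therefore so are all piece sizes and $\sum_A\abs{A}^2$. (In the implementation one first passes to the compact neighborhood kernel via \AlgCmpNbrKnl: forced co-assignment is an equivalence relation, so each bag is one of its classes, and every non-equidistant vertex is collapsed into the bag of its forced landmark; the maximum degree stays at most $\Delta$ by Lemma~\ref{lem:compact-nbr-knl-in-degree}, so each remaining non-landmark bag still has at most $\Delta$ in-neighbor options, and there are at most $k$ such bags.)

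The algorithm then precomputes the kernel, the BFS layers, and for each landmark the number of non-equidistant vertices forced to it (all in $\bigo{n+m}$), and recursively, in layer order, branches on which in-neighbor each equidistant vertex follows; at a leaf it reads off the resulting piece-size vector and the value $\sum_A\abs{A}^2$, updated incrementally in $\bigo{k}$, and returns the minimum. Correctness is immediate: every valid partitioning restricts to one of the branched choices on the equidistant vertices, and by the structural lemma its objective value equals exactly what the corresponding leaf computes; hence the algorithm is exact and runs in $\bigo{n+m+\Delta^{k}k}$ time, proving the theorem. The main obstacle is the structural lemma, specifically arguing that a vertex with several kernel in-neighbors is nonetheless \emph{forced} unless it is equidistant from multiple landmarks, as this is what collapses the apparently exponential-in-depth branching down to only $k$ branch points.
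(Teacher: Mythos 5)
Your proposal is correct and follows essentially the same route as the paper: reduce to the (compact) neighborhood kernel, observe that only the $k$ equidistant vertices retain a genuine choice with at most $\Delta$ in-neighbor options each, and brute-force the $\Delta^k$ assignments with polynomial work per branch. The only difference is one of explicitness — your parent-map correspondence and structural lemma prove directly the forcing claim that the paper delegates to the maximality condition in the definition of the compact neighborhood kernel (namely, that a vertex with a unique closest landmark is assigned to it in every valid partitioning, even if it has several kernel in-neighbors) — and your stated $\bigo{n+m+\Delta^k k}$ leaf-update bound is optimistic, but any polynomial per-leaf cost suffices for the FPT claim.
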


\begin{proof}
    Given a graph $G=(V,E)$, landmarks $L \subseteq V$,
    and their compact neighborhood kernel $(H_c, \phi)$,
    let $\Delta$ be the maximum degree of $G$.
    Let $X$ be $V(H_c) \setminus L$.
    Since all vertices in $X$ are equidistant from multiple landmarks,
    $\abs{X} \leq k$.
    From Lemma~\ref{lem:compact-nbr-knl-in-degree},
    each vertex in $H_c$ has at most $\Delta$ in-neighbors.
    Every vertex in $X$ must be in the same piece as one of its in-neighbors
    for a valid neighborhood partitioning, giving at most $\Delta^k$ possible assignments.
    Other computations can be done in polynomial time in $n$,
    so the brute-force approach results in time $\bigo{\Delta^k n^{\bigo{1}}}$.
\end{proof}

The algorithm \AlgPrtBranch (documented and implemented in \cite{codebase}) reinforces
this idea by combining efficient base-case handling with naive branch-cut functionality.
For a base case, we apply \AlgPrtLayer if possible;
specifically, if there is only one layer $X_i$ left and all bags are of size 1,
\AlgPrtLayer gives an exact solution in $\bigo{\abs{L}n^3}$ time.
To obtain a lower-bound for a branch cut, we exploit the fact that our
partial solutions $\mathcal{A'}$ satisfy the following property:
any solution $\mathcal{A}$ extending $\mathcal{A'}$ has
$\sum_{A \in \mathcal{A}} \abs{A}^2 \geq
\left(\sum_{A' \in \mathcal{A'}} \abs{A'}^2\right) +
(n - \abs{\bigcup \mathcal{A'}})^2/\abs{L}$.
The total running time of this algorithm is $\bigo{\Delta^k \abs{L}n^3}$.

\begin{figure*}[ht]
    \centering
    \begin{minipage}[t]{.50\linewidth}
        \vspace{0pt}
        \centering
        \includegraphics[scale=0.12]{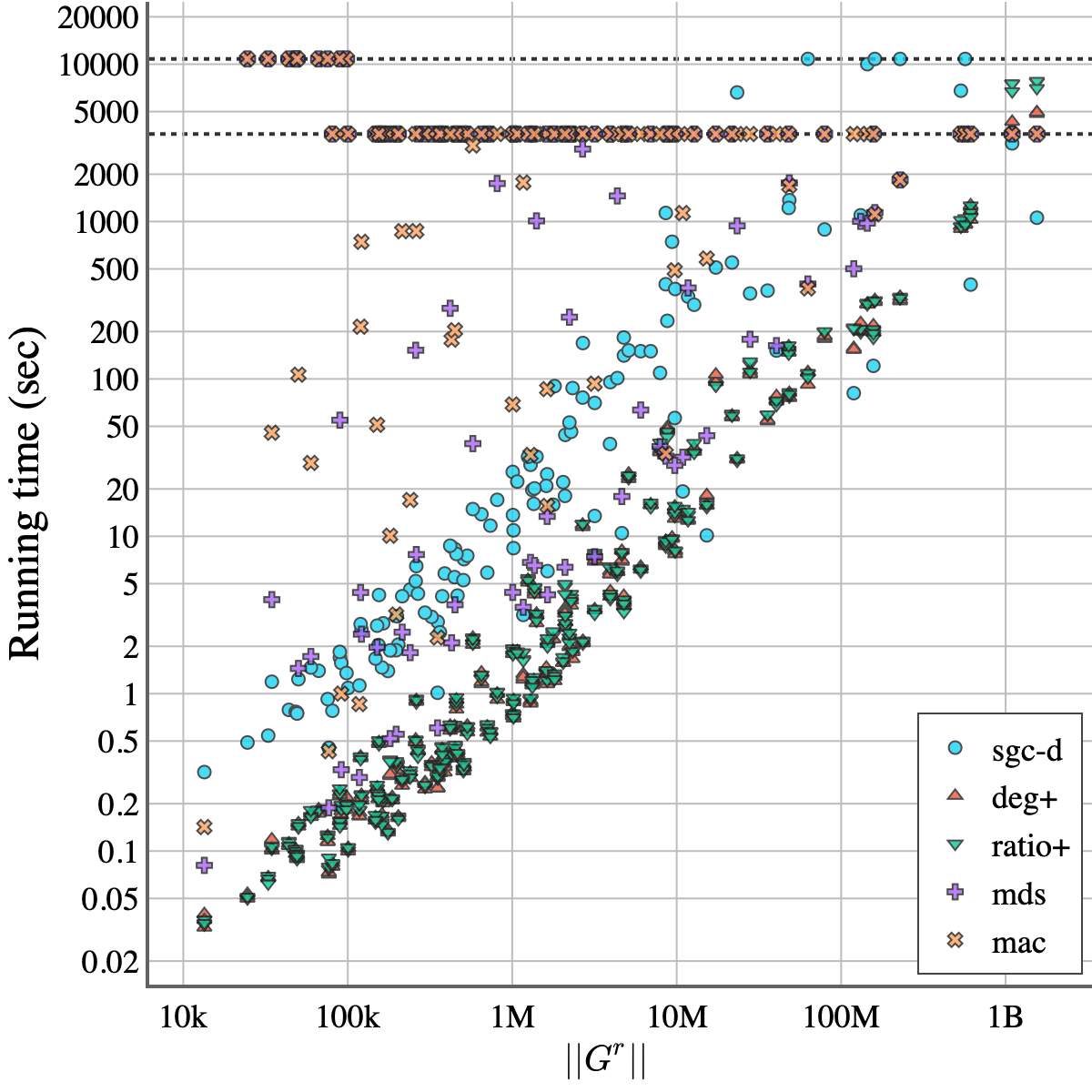}
    \end{minipage}%
    \begin{minipage}[t]{.50\linewidth}
        \vspace{0pt}
        \centering
        \includegraphics[scale=0.12]{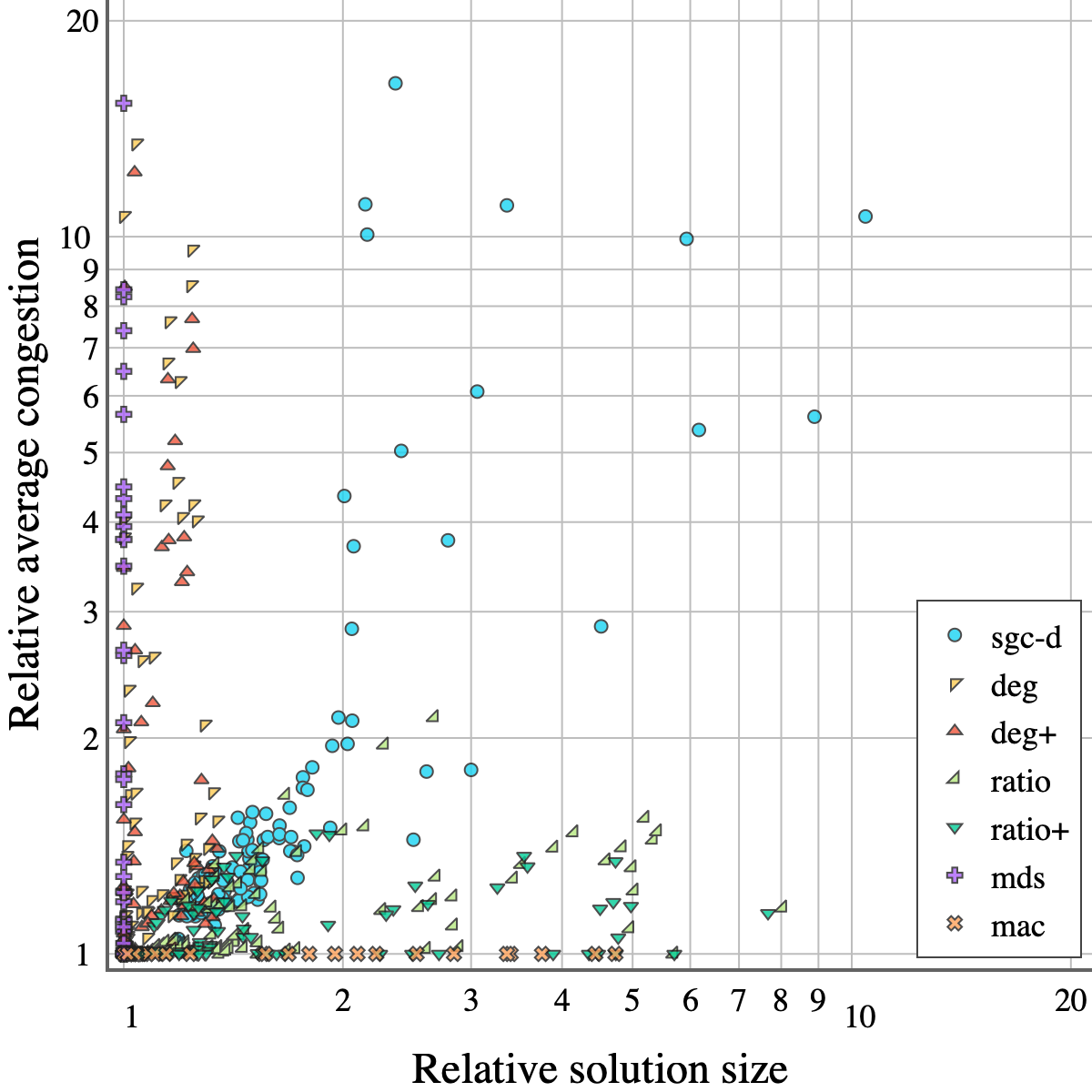}
    \end{minipage}%
    \caption{Experiment results of dominating set algorithms. 
    The left chart shows a log-scale distribution of runtimes sorted by $\norm{G^r}$
    for $r=1,2$.
    Time limits are depicted in dotted lines (1 hour and 3 hours). 
    The right chart shows a distribution of solution sizes and average congestions 
    relative to the best obtained for each experiment configuration.}
    \label{fig:exp-domset}
\end{figure*}

\subsubsection{Quadratic Programming}

Finally, we give a quadratic programming formulation.
Given a graph $G=(V,E)$ and landmarks $L \subseteq V$, let $(H_c=(V_c,E_c), \phi)$
be the compact neighborhood kernel of $G$ on $L$.
Let $x_{u,v} \in \{0,1\}$ be variables for all $u \in L, v \in V_c$;
set $x_{u,v}=1$ if and only if $v$ is assigned to landmark $u$.

\begin{LabelBox}{\AlgPrtQP}
\begin{talign*}
    \text{Minimize } &\sum_{u \in L} \left(\sum_{v \in V_c} \abs{\phi(v)} x_{u,v} \right)^2\\
    \text{Subject to } &\sum_{u \in L} x_{u,v} = 1 \quad\text{for all }v \in V_c,\\
    &\sum_{w \in N_{H_c}^{-}[v]} x_{u,w} \geq x_{u,v} \quad \text{for all } u \in L, v \in V_c \setminus L.\nonumber
\end{talign*}
\end{LabelBox}

By Theorem~\ref{thm:min-var-square-relation}, the objective function guarantees
minimum variance.
The first constraint enforces that every vertex must be assigned to exactly one landmark,
and the second constraint guarantees that every non-landmark must be in the same piece
as one of its in-neighbors.
The balanced neighborhood partitioning is given by
$\mathcal{A} = \{A_{u_1}, \ldots, A_{u_\ell}\}$
where $\displaystyle A_u = \bigcup_{v \in V_c: x_{u,v}=1} \phi(v)$.

\section{Experiments}
\label{sec:exp}

\begin{figure*}[ht]
    \centering
    \begin{minipage}[t]{.45\linewidth}
        \vspace{0pt}
        \centering
        \includegraphics[scale=0.16]{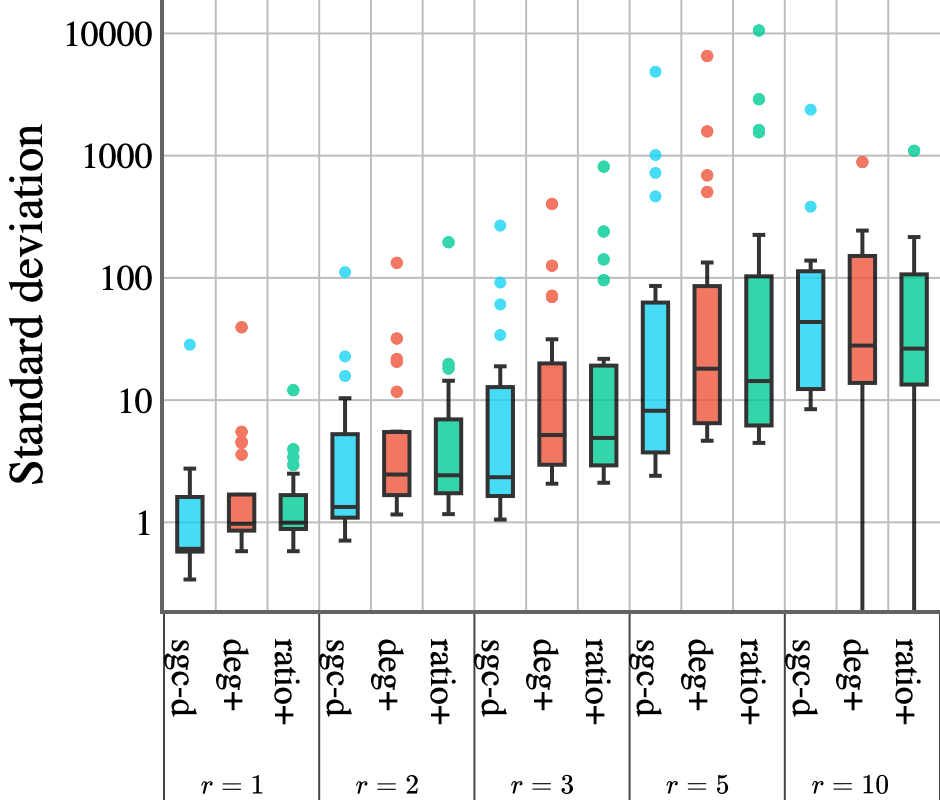}
    \end{minipage}%
    \begin{minipage}[t]{.55\linewidth}
        \vspace{0pt}
        \centering
        \includegraphics[scale=0.16]{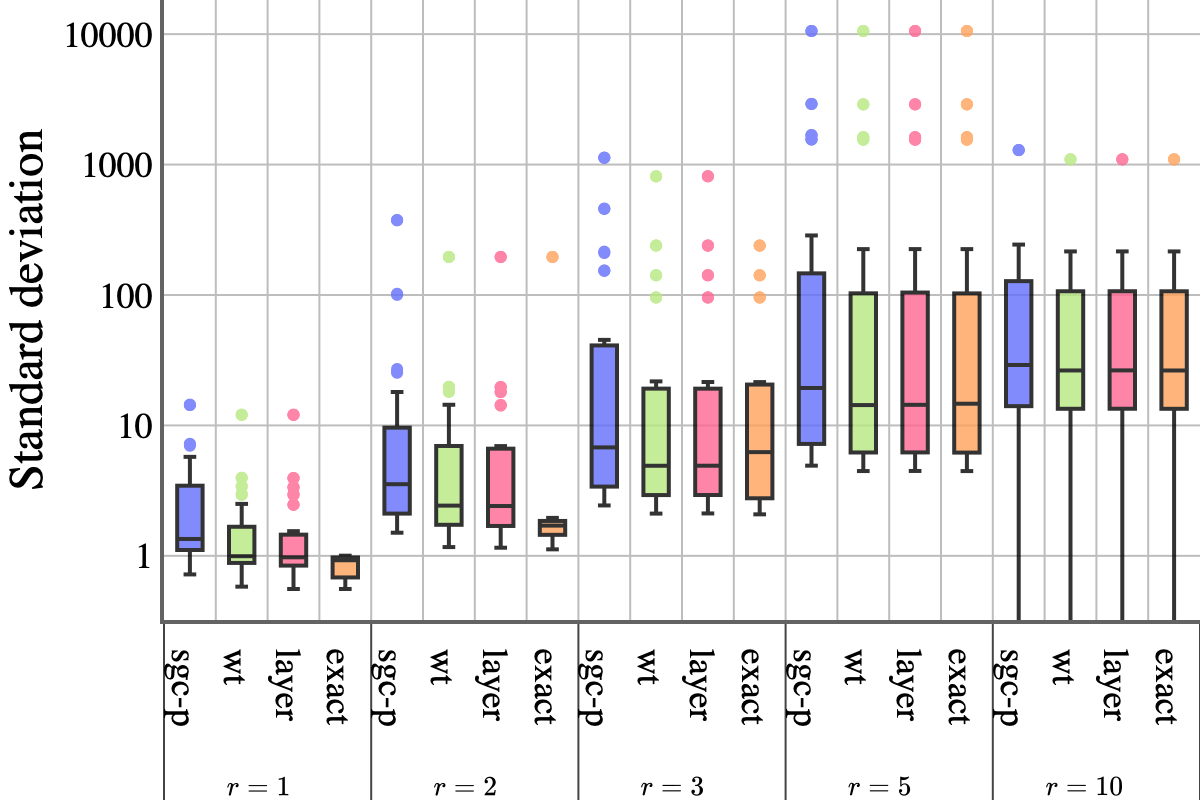}
    \end{minipage}
    \caption{Piece-size standard deviations from neighborhood partitioning algorithms
    on small/medium instances.
    At left, we compare the impact of dominator selection
    (with partitioning algorithm \AlgPrtWeight);
    at right, we fix the dominators (using \AlgRatioPlus) and vary the \PrbBNPShort solver.
    Exact solutions computed with \AlgPrtQPShort;
    results with timeouts are not shown.}
    \label{fig:exp-nbrprt}
\end{figure*}

To complement our theoretical results, we implemented and evaluated our algorithms
on a diverse corpus of networks including cDBGs constructed from real metagenomes,
small instances from DIMACS10~\cite{bader2013graph}, and graphs from~\cite{nguyen2020}.
We categorized each graph as \textit{small}, \textit{medium}, or \textit{large}
 based on the number of edges; all data, along with a summary table of statistics
 are available at~\cite{codebase}. All algorithms were implemented in C++ %
and experiments were conducted on a Linux machine
(details in Appendix~\ref{sec:exp-setup}).

\subsection{Sparse Dominating Sets}

To evaluate our algorithms for finding a low-congestion dominating set,
 we first tested the effectiveness of tie-breaking strategies by running
 \AlgDegree (denoted \AlgDegreeShort), \AlgRatio (\AlgRatioShort),
 \AlgDegreePlus (\AlgDegreePlusShort) as well as \AlgRatioPlus(\AlgRatioPlusShort)
 on small instances.
The algorithms with a tie-breaking strategy consistently outperformed those without one,
 finding smaller dominating sets in 78\% of the experiments and
 less-congested dominating sets in 93\% of them.
Running time was also improved by tie-breaking strategies in 55\% of instances,
 likely due to its proportionality to solution size.

Based on these findings, we restricted our attention to \AlgDegreePlusShort and
\AlgRatioPlusShort for the remaining experiments.
We evaluated them, along with \AlgDomSGC (\AlgDomSGCShort),\looseness-1
~\AlgMDS (\AlgMDSShort), and \AlgMAC (\AlgMACShort) on the full corpus with radii $\{1, 2\}$
(plus radii $\{3, 5, 10\}$ on small and medium instances). We used a default 3-hour timeout,
reduced to 1 hour for \AlgMDSShort and \AlgMACShort for radius $>1$ and medium/large instances.

Figure~\ref{fig:exp-domset} (left) shows the distribution of running times sorted
by $\norm{G^r}$, the number of edges in the $r$-th power of $G$.
We observe a linear trend for \AlgDomSGCShort, \AlgDegreePlusShort, and \AlgRatioPlusShort,
establishing efficiency.
While the exact algorithms \AlgMDSShort and \AlgMACShort are prone to timing out, they did
finish on some larger instances. We hypothesize this success is directly related to reduction
rules in the ILP solver related to vertices with degree $1$ and $2$.

To evaluate solution quality, we plotted the relationship between the solution size and
average congestion, both relative to the best-known (Figure~\ref{fig:exp-domset}, right).
As might be expected, \AlgDegreePlusShort and \AlgMDSShort find smaller, more congested
dominating sets, while \AlgRatioPlusShort and \AlgMACShort find larger, sparser ones.
The algorithm used in the prior metagenomic analysis (\AlgDomSGCShort) intermediates between
the two, but often at the cost of being much larger or more congested.\looseness-1

\subsection{Balanced Neighborhood Partitioning}

Turning to the problem of generating uniformly-sized pieces around a set of dominators,
we tested the scalability and solution quality of the algorithms for \PrbBNPShort
as well as the impact of choosing smaller versus sparser dominating sets.
To this end, we ran \AlgPrtSGC (\AlgPrtSGCShort),
\AlgPrtWeight (\AlgPrtWeightShort), \AlgPrtLayer (\AlgPrtLayerShort),
\AlgPrtBranch \linebreak(\AlgPrtBranchShort), and \AlgPrtQP (\AlgPrtQPShort)
using dominating sets produced by \AlgDomSGCShort, \AlgDegreePlusShort, and \AlgRatioPlusShort.
All runs were subject to a 1-hour timeout.

We first evaluated the running time and timeout rate for all \PrbBNPShort
algorithms (Table~\ref{tab:exp-nbr-runtime}) on the corpus of small instances.
The algorithm \AlgPrtWeightShort was consistently fastest, followed by \AlgPrtSGCShort;
both had no timeouts.
The polynomial-time \AlgPrtLayerShort was slower than the linear-time approaches but still
completed nearly all small and medium instances. It is particularly notable that when $r=1$,
it successfully output optimal solutions, which \AlgPrtQPShort could not find.
We also observed that \AlgPrtQPShort's performance improves when given a cDBG and dominators
from \AlgDegreePlusShort or \AlgRatioPlusShort.
Lastly, \AlgPrtBranchShort was unable to finish with larger radii,
even on these small instances, eliminating it from use in additional experiments.

\noindent\begin{minipage}{\linewidth}
    \vspace*{1em}
    \centering
    \makebox[\linewidth]{
        \centering
        \footnotesize
        \begin{tabular}{|c|c|r|r|}
            \hline
            Config & Algorithm & Rel. Runtime & \%Timeout \\
            \hline
            \multirow{5}{*}{All}
            & \AlgPrtWeightShort &  1.0 & --- \\
            & \AlgPrtSGCShort & 28.1 & --- \\
            & \AlgPrtLayerShort    & 49.8 & --- \\
            & \AlgPrtBranchShort  & 135.6 & 75.0\% \\
            & \AlgPrtQPShort     & 667.5  & 36.1\% \\
            \hline
            & \AlgPrtWeightShort &  1.0 & --- \\
            cDBG & \AlgPrtSGCShort & 35.4 & --- \\
            + & \AlgPrtLayerShort & 28.9 & --- \\
            \AlgRatioPlusShort & \AlgPrtBranchShort  & 71.8 & 80.0\% \\
            & \AlgPrtQPShort     & 149.3  & 13.3\% \\
            \hline
        \end{tabular}
    }

    \captionsetup{hypcap=false}
    \captionof{table}{Average running time relative to \AlgPrtWeight (\AlgPrtWeightShort)
     (Rel. Runtime) and timeout rate for \PrbBNPShort algorithms
     on the corpus of small instances.}
    \label{tab:exp-nbr-runtime}
    \vspace*{1em}
\end{minipage}

\vspace*{-1.5em}

For both remaining experiments, we ran on all small and medium instances.
To assess the impact of dominator selection on the variance of piece sizes, we restrict our
attention to \AlgPrtWeightShort and measure the standard deviations of
the piece sizes (Figure~\ref{fig:exp-nbrprt} (left)).
In general, dominating sets from \AlgRatioPlusShort tend to result in more balanced
neighborhood partitionings than \AlgDegreePlusShort, with some outliers at larger radii.
We note that while \AlgPrtSGCShort achieves the least variance (except at $r=10$),
this is in part due to having larger dominating sets (thus a smaller mean piece size), exposing a limitation of this metric.

Finally, we compare the solution quality of our greedy algorithms for neighborhood
partitioning using fixed dominating sets found by \AlgRatioPlusShort.
In Figure~\ref{fig:exp-nbrprt} (right), we see that \AlgPrtWeightShort achieves more balanced
pieces than \AlgPrtSGCShort at all radii while keeping the same asymptotic
running time. The more expensive \AlgPrtLayerShort performs even better, achieving nearly optimal results.
With larger radii (e.g. $r \geq 3$), \AlgPrtWeightShort performs as well as \AlgPrtLayerShort
 does.

\subsection{Metagenome Neighborhood Queries}

Given the motivation for this work is improving the metagenomics analysis pipeline
from~\cite{Brown2020}, we also assessed the impact of our techniques in this setting.
Specifically, we verified that using sparse dominating sets and balanced neighborhood partitionings
from our algorithms does not significantly degrade the containment and similarity of neighborhood
queries in \textsf{HuSB1}, a large real-world metagenome \cite{hu2016genome}.
We used Brown et al.'s replication pipeline\footnote{https://github.com/ctb/2020-rerun-hu-s1}
and compared our algorithms (\AlgDegreePlusShort and \AlgRatioPlusShort for \PrbMCDSShort;
\AlgPrtWeightShort for \PrbBNPShort) to those used in~\cite{Brown2020}
(\AlgDomSGCShort and \AlgPrtSGCShort, respectively).
While~\cite{Brown2020} restricted their attention to radius 1, we also include results from
radius 2, as this is of interest in ongoing related work by Brown et al.\looseness-1

Before running neighborhood queries, we evaluated the piece size distributions as shown in
Figure~\ref{fig:exp-query} (left, center). We observe that independent of dominator
selection, \AlgPrtWeightShort successfully reduced the piece size variance.
When comparing dominating sets, we see a trend consistent with other experiments.
\AlgDomSGCShort chooses many dominators, resulting in smaller more uniform piece sizes.
\AlgDegreePlusShort has the fewest dominators and higher variance, larger pieces and
\AlgRatioPlusShort gives larger, balanced pieces.

\begin{figure*}[ht]
    \centering
    \begin{minipage}[m]{.5\linewidth}
        \vspace{0pt}
        \centering
        \includegraphics[scale=0.16]{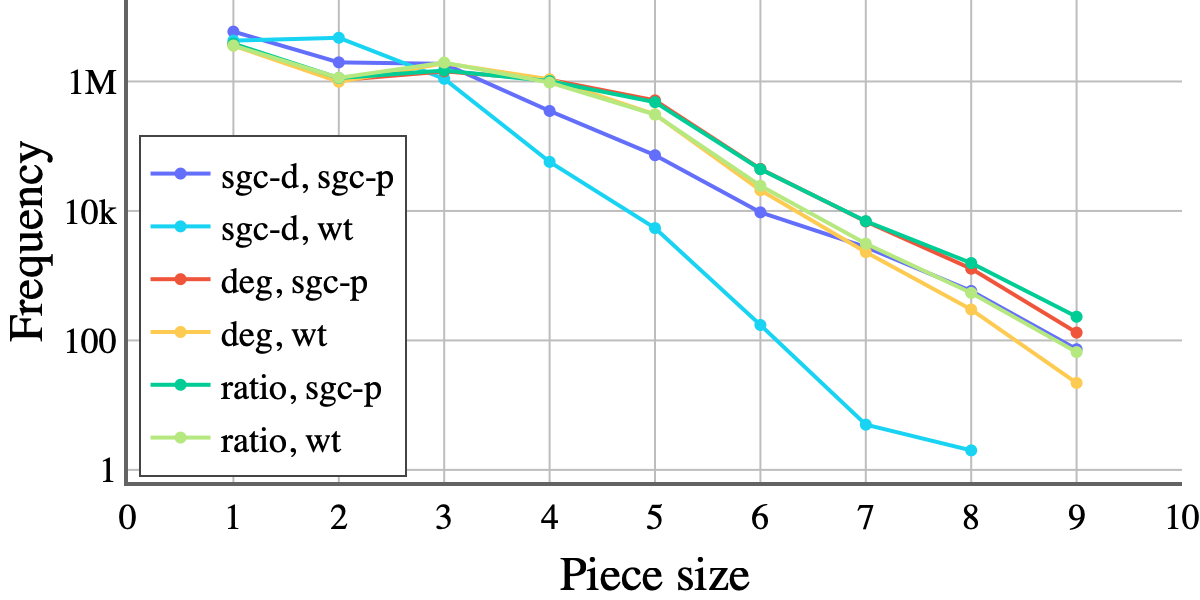} \\
        \includegraphics[scale=0.16]{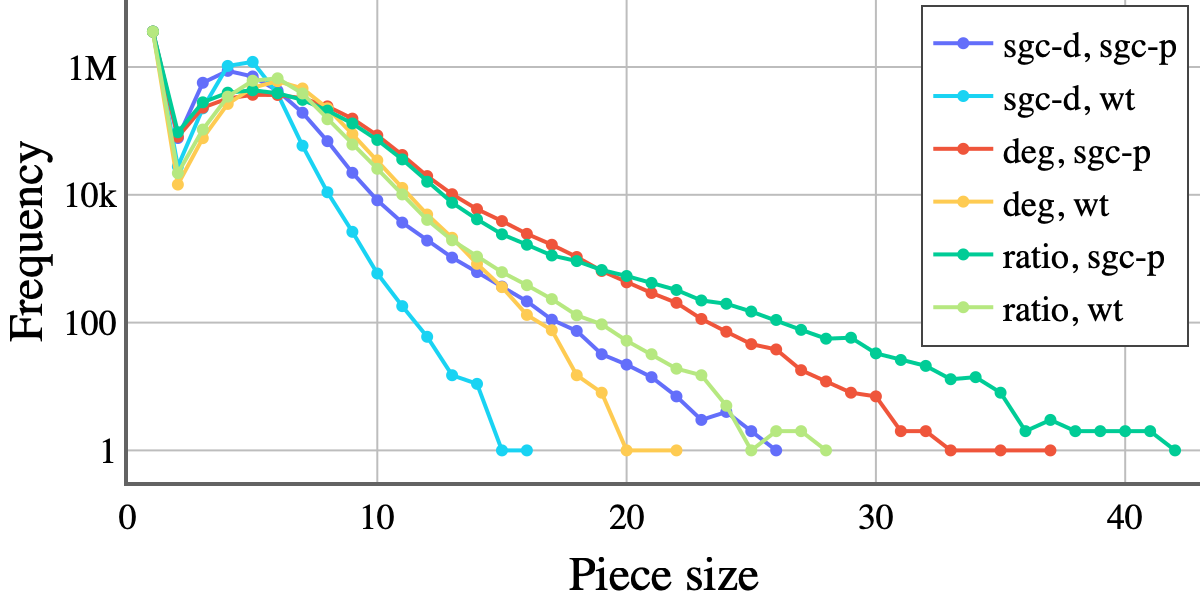}
    \end{minipage}%
%
%
    \begin{minipage}[m]{.49\linewidth}
        \vspace{0pt}
        \centering
        \includegraphics[scale=0.20]{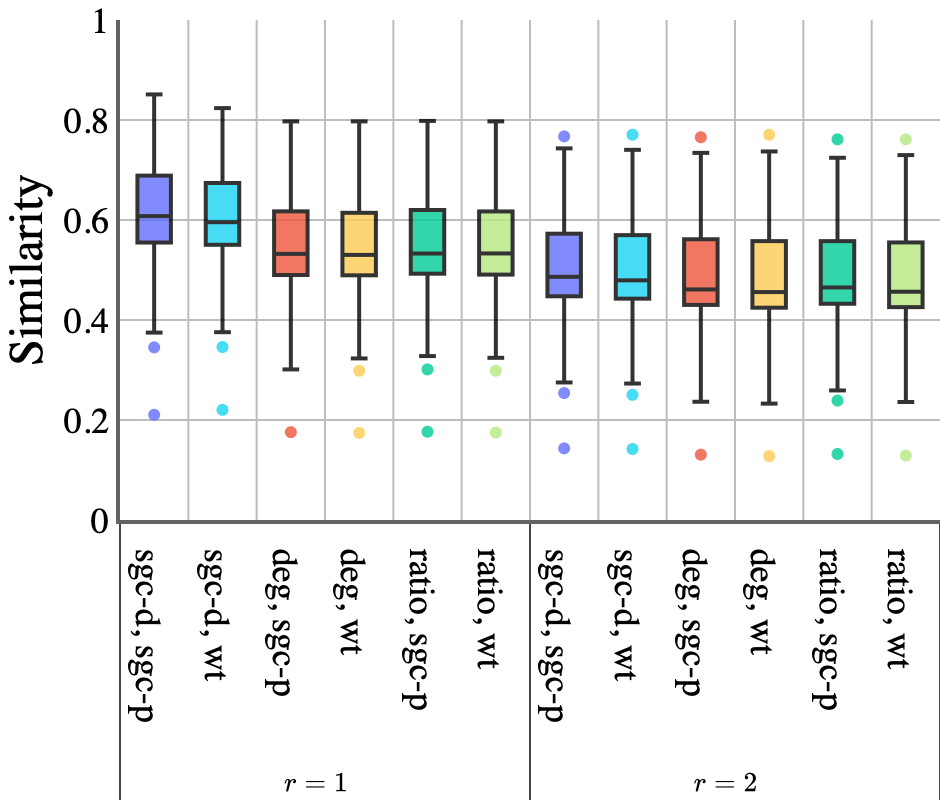}
    \end{minipage}%
    \caption{Experiment results of neighborhood queries. The left panel shows
    log-scale distributions of the resulting piece sizes
    for radius $1$ (top) and $2$ (bottom).
    The right panel shows the Jaccard similarity of the queries and their neighborhoods.}
    \label{fig:exp-query}
\end{figure*}

Finally, we re-ran the neighborhood queries from ~\cite{Brown2020}.
While the containment was completely preserved,
the similarities\footnote{Jaccard similarity to original query} exhibited mild variation,
shown in Figure~\ref{fig:exp-query} (right). While there is a slight overall reduction
(partly explainable by having fewer pieces), the difference is marginal at $r=2$.
These preliminary results indicate that our techniques can significantly improve the balance
of piece sizes (leading to more efficient downstream analysis when a hierarchy of dominating
sets is constructed) without significantly degrading fidelity of queries.
Additional experiments are needed to better assess the impact of this work in the
metagenomic setting.
%

\section{Conclusions \& Future Work}

This paper tackles two problems arising in a recent approach to metagenome analysis:
finding ($r$-)dominating sets which minimize the number of vertices with multiple closest
dominators, and partitioning the vertices of a graph into connected pieces around a set of landmarks
minimizing variance of the piece-sizes while guaranteeing every vertex is assigned to some
closest landmark.

We formalize the first using the notion of \textit{congestion},
 and show that finding minimum congestion $r$-dominating sets (\PrbMCDSShort) is NP-hard.
We introduce linear-time algorithms for finding low-congestion dominating sets and evaluate
 their effectiveness on a large corpus of real-world data,
 showing trade-offs between solution size and average congestion.
It remains open whether there is a constant upper bound on the minimum average congestion
 (the largest value observed in our experiments was 3.7351);
 further, we believe the approximation bound on \PrbMCDSShort is not tight.
We are intrigued by the connection of this problem to \PrbPCE; this could be a fruitful direction for future work.

Turning to the partitioning problem (\PrbBNPShort), we show that at radius $1$,
flow-based techniques give exact solutions in polynomial time, but the problem becomes
NP-hard even in very restricted cases as soon as $r \geq 2$.
Our heuristics, however, produce nearly optimal results in our experiments.
Further, we show that using sparse dominating sets (such as those from \AlgRatioPlusShort)
improves both the running time and solution quality of algorithms for \PrbBNPShort.
A natural question is whether there are alternative measures of partition balance in
real-world data which avoid the inverse relationship between dominating set size
and the variance of the piece size distribution.\looseness-1

Finally, we integrate our algorithms into the metagenomic analysis pipeline
used in~\cite{Brown2020} and demonstrate that on a large real metagenome (\textsf{HuSB1}),
sparse dominating sets and balanced neighborhood partitionings reduce piece size variability
relative to the prior approaches without significantly degrading the fidelity of neighborhood
queries.
It remains to see how these preliminary results extend to other metagenomic datasets and
queries.

\begin{acks}
  Special thanks to Rachel Walker for early discussions on notions of
 sparse dominating sets \& related terminology, and to Titus Brown \& Taylor Reiter
 for assistance with metagenomic datasets and the \sgc software.
This work was supported by grant GBMF4560 to Blair D. Sullivan
 from the Gordon and Betty Moore Foundation. \looseness-1

\end{acks}

\bibliographystyle{ACM-Reference-Format}
\bibliography{ms}

\appendix

\newpage
\section{Theoretical Results}
\label{sec:theory}

\subsection{Neighborhood Partitioning Proofs}
\label{sec:nbr-detail}

First, we prove Theorem~\ref{thm:min-var-square-relation},
 which relates the minimum variance to the minimum sum of squares.

\begin{proof}[Proof of Theorem~\ref{thm:min-var-square-relation}]
    Let $X=\{x_1,\ldots,x_\ell\}$ be the piece sizes of $\mathcal{A}$.
    By definition of population variance,
    $\text{Var}(X)=$\linebreak $\frac{1}{\abs{X}}\sum_{i=1}^\ell \left(x_i - \mu\right)^2=
    \left(\frac{1}{\abs{X}}\sum_{i=1}^\ell x_i^2 \right) - \mu^2$,
    where $\mu=\frac{1}{\abs{X}}\sum_{i=1}^{\ell}x_i=\abs{V}/\ell$.
    Since $\mu$ is independent of $\mathcal{A}$,
    the minimum variance is always achieved with the minimum square sum.
\end{proof}

Next, we define \PrbSBA, which can be viewed as a generalization of an assignment problem
where every \textit{agent} is required to have the same number of \textit{tasks}.

\begin{ProblemBox}{\PrbSBA (\PrbSBAShort)}
    \Input & Agents $A$, tasks $T$, and a binary relation $R \subseteq A \times T$.\\
    \Prob  & Find a function $\phi: T \rightarrow A$ such that $(\phi(t), t) \in R$
    for every $t \in T$, and the population variance of the number of assigned tasks
    for each agent $\text{Var} (\{ \abs{\phi^{-1}(a)}  : a \in A\})$ is minimized.\\
\end{ProblemBox}
We use the following results to prove Theorem~\ref{thm:runtime-bnp}.
\vspace{1em}
\begin{theorem}\label{thm:runtime-bap}
    \PrbSBAShort can be solved in time $\mathcal{O}(\abs{A}^2\abs{T}^2 +$ $ \abs{A}\abs{T}^3)$.
\end{theorem}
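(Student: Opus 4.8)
The plan is to reduce \PrbSBAShort to a single minimum-cost flow computation. First, note that for any valid $\phi$ we have $\sum_{a \in A}\abs{\phi^{-1}(a)} = \abs{T}$, so the mean of the distribution $\{\abs{\phi^{-1}(a)} : a \in A\}$ equals $\abs{T}/\abs{A}$, which is fixed; by the same argument as in the proof of Theorem~\ref{thm:min-var-square-relation}, minimizing the population variance is equivalent to minimizing the sum of squares $\sum_{a \in A}\abs{\phi^{-1}(a)}^2$. Hence it suffices to find a valid $\phi$ minimizing this quantity.

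Next I would build a flow network $N$ realizing $\sum_a\abs{\phi^{-1}(a)}^2$ as flow cost using the standard convex-cost gadget: a source $s$ and sink $z$; for each task $t \in T$ an arc $s\to t$ of capacity $1$ and cost $0$; for each $(a,t)\in R$ an arc $t\to a$ of capacity $1$ and cost $0$; and for each agent $a \in A$ a bundle of $\abs{T}$ parallel arcs $a\to z$, each of capacity $1$, with costs $1, 3, 5, \ldots, 2\abs{T}-1$. Since $\sum_{j=1}^{k}(2j-1)=k^2$ and the costs within a bundle are increasing, the cheapest way to route $k$ units through agent $a$ uses the $k$ lowest-cost arcs of its bundle and pays exactly $k^2$. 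Thus a minimum-cost flow of value $\abs{T}$ in $N$ has cost exactly $\min_\phi \sum_a \abs{\phi^{-1}(a)}^2$, and reading off, for each task $t$, the unique agent carrying a unit on arc $t\to a$ recovers an optimal $\phi$. If the maximum flow value is strictly below $\abs{T}$, some task has no admissible agent and no valid $\phi$ exists.

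For the running time I would use the successive-shortest-paths method. The network has $\bigo{\abs{A}+\abs{T}}$ vertices and $\bigo{\abs{A}\abs{T}}$ arcs (at most $\abs{T}$ source arcs, $\abs{R}\le\abs{A}\abs{T}$ middle arcs, and $\abs{A}\abs{T}$ sink arcs). Every augmenting $s$-$z$ path must leave $s$ along some unsaturated arc $s\to t$ of residual capacity $1$, so each augmentation pushes exactly one unit and there are exactly $\abs{T}$ of them (stopping early, signalling infeasibility, if no path remains). Each augmentation is one shortest-path computation on the residual network, which may have negative-cost arcs but no negative cycle, so a Bellman--Ford pass suffices in time $\bigo{(\abs{A}+\abs{T})\cdot\abs{A}\abs{T}} = \bigo{\abs{A}^2\abs{T}+\abs{A}\abs{T}^2}$. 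Multiplying by the $\abs{T}$ iterations yields the claimed bound $\bigo{\abs{A}^2\abs{T}^2+\abs{A}\abs{T}^3}$.

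The main obstacle is the correctness of the convex-cost gadget: one must verify that a \emph{globally} minimum-cost flow pays $k_a^2$ per agent receiving $k_a$ units. This follows because the joint minimization over routings and over the realizable count vector $(k_a)_{a\in A}$ decouples --- for any fixed realizable count vector the cheapest compatible flow fills the lowest-cost bundle arcs and costs $\sum_a k_a^2$, so the min-cost flow simply selects the best realizable $(k_a)$. Everything else (vertex and arc counts, one unit per augmentation, the Bellman--Ford bound, and extracting $\phi$) is routine bookkeeping.
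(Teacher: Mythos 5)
Your proposal is correct and follows essentially the same route as the paper: the same reduction to minimizing $\sum_a \abs{\phi^{-1}(a)}^2$ via the fixed-mean argument, the same convex-cost gadget of parallel unit-capacity arcs with costs $1,3,5,\ldots$ (you attach the bundle on the sink side and orient the network tasks-to-agents, while the paper attaches it at the source and orients agents-to-tasks, but these are mirror images), and the same Bellman--Ford/successive-shortest-paths bound of $\bigo{\abs{f_{\max}}\abs{V}\abs{E}}$ yielding $\bigo{\abs{A}^2\abs{T}^2+\abs{A}\abs{T}^3}$. Your explicit justification that the global minimum-cost flow fills each bundle greedily is a welcome bit of extra care that the paper leaves implicit.
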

\begin{proof}
    The same argument as Theorem~\ref{thm:min-var-square-relation} applies here, and
    \PrbSBAShort is equivalent to asking for an assignment minimizing the square sum of
    $\abs{\phi^{-1}(a)}$ for each $a \in A$.

    We transform this problem to the minimum-cost maximum flow problem with integral flows.
    Construct a multi-digraph $G=(V,E)$.
    The set of vertices $V$ includes $A$, $T$,
    the source $\alpha$ and sink $\beta$ of the flow.
    We construct the edges $E$ as follows.
    $A$ and $T$ preserve their relation $R$ as edges, oriented from $A$ to $T$.
    For every task $t \in T$, add one edge from $t$ to $\beta$.
    For every agent $a \in A$, add multiple edges from $\alpha$ to $a$ so
    that the number of incoming edges matches the number of outgoing edges at $a$.
    Only these edges from $\alpha$ have costs.
    For the $i^{\textnormal{th}}$ edge to each $a$, denoted $e_{a,i}$,
    we set the cost to $2i - 1$. All edges have capacity $1$.
    There are at most $2\abs{R} + \abs{T}$ edges in this graph,
    so the construction can be done in $\bigo{|A||T|}$ time.
    See Figure~\ref{fig:flow-conversion} for an example.

    Let $c(e)=1$ be the capacity and $w(e)$ be the cost of each edge.
    Also, let $\hat{w}$ be the minimum cost and
    $\hat{f}$ be the maximum flow amount of $(G,c,w)$ from $\alpha$ to $\beta$.
    We claim that for any $k \in \Z$, we get $\hat{w}\leq k$ and $\hat{f}=|T|$
    if and only if the \PrbSBAShort instance $(A,T,R)$ admits a valid assignment
    $\phi: T \to A$ with the square sum of the number of assigned tasks
    $\sum_{a \in A}\abs{\phi^{-1}(a)}^2 \leq k$.

    To prove the correctness in the forward direction, suppose that there
    exists a flow of amount $|T|$ and cost $\hat{w} \leq k$.
    The flow at $t \in T$ is $1$, and there is incoming flow from
    only one agent $a \in A$, which corresponds to assignment $t \mapsto a$.
    Likewise, the flow at $a \in A$ must equal $\abs{\phi^{-1}(a)}$. The minimum cost
    can be achieved by
    $\sum_{a\in A}\sum_{i=1}^{\abs{\phi^{-1}(a)}} (2i - 1) =
     \sum_{a \in A}\abs{\phi^{-1}(a)}^2 \leq k$.

    On the other hand, assume that there exists a valid
    assignment $\phi: T \to A$. Then, we can construct the following flow $f: E \to \Z$
    with amount $\abs{T}$ and cost $k$. Thus, $\hat{w} \leq k$, as desired.
    \vspace*{-0.5em}
    \begin{align*}
        f(e_{a,i}) &= \begin{cases}
            1 \quad \text{ if } i \leq \abs{\phi^{-1}(a)}\\
            0 \quad \text{ otherwise}
        \end{cases}\\
        f((a, t)) &= \begin{cases}
            1 \quad \text{ if } a = \phi(t)\\
            0 \quad \text{ otherwise}
        \end{cases} & \text {for every }a \in A\\
        f((t, \beta)) &= 1 & \text {for every }t \in T
    \end{align*}

    Since the capacities are integral, there exists an integral maximum flow \cite{ford1956maximal}.
    The Bellman-Ford algorithm can find an integer-valued minimum-cost maximum flow
    in time $\bigo{|f_{\max}||V||E|}$.
    In our case, that is $\mathcal{O}(|T|(|A|+|T|)|A||T|) = \mathcal{O}(|A|^2|T|^2+|A||T|^3)$.
\end{proof}

\noindent\begin{minipage}{\linewidth}
    \vspace*{1em}

    \definecolor{color1}{RGB}{99,110,250}
    \centering
    \makebox[\linewidth]{
        \begin{tikzpicture}

            \tikzstyle{lnode} = [circle, fill=white, draw, thick, scale=1, minimum size=0.2cm, inner sep=1.5pt]

            \tikzstyle{directed} = [color=black, arrows=- triangle 45]

            \node[lnode] (a1) at (0, 1.5) [label=left:$\alpha$] {};
            \node[lnode] (b1) at (6, 1.5) [label=right:$\beta$] {};
            \node[lnode] (x1) at (2, 0.5) {};
            \node[lnode] (x2) at (2, 1.5) {};
            \node[lnode] (x3) at (2, 2.5) {};
            \node[lnode] (y1) at (4, 0.5) {};
            \node[lnode] (y2) at (4, 1) {};
            \node[lnode] (y3) at (4, 2) {};
            \node[lnode] (y4) at (4, 2.5) {};

            \draw[directed] (a1) [label=$1$] to [out=-60,in=180] node[xshift=4pt, yshift=4pt, midway] {$1$} (x1);
            \draw[directed] (a1) [label=$1$] to [out=-80,in=210] node[xshift=-4pt, yshift=-4pt, midway] {$3$} (x1);

            \draw[directed] (a1) [label=$1$] to [out=80,in=160] node[xshift=-10pt, yshift=0pt, midway] {$1$} (x3);
            \draw[directed] (a1) [label=$1$] to [out=50,in=180] node[xshift=-8pt, yshift=0pt, midway] {$3$} (x3);
            \draw[directed] (a1) [label=$1$] to [out=25,in=210] node[xshift=-6pt, yshift=2pt, midway] {$5$} (x3);

            \draw[directed] (a1) [label=$1$] to [out=10,in=170] node[xshift=10pt,yshift=4pt,midway] {$1$} (x2);
            \draw[directed] (a1) [label=$1$] to [out=-10,in=190] node[xshift=10pt, yshift=-4pt, midway] {$3$} (x2);

            \draw[directed, color1, line width=1.2pt] (x1) -- (y1);
            \draw[directed, color1, line width=1.2pt] (x1) -- (y2);
            \draw[directed, color1, line width=1.2pt] (x2) -- (y2);
            \draw[directed, color1, line width=1.2pt] (x2) -- (y3);
            \draw[directed, color1, line width=1.2pt] (x3) -- (y2);
            \draw[directed, color1, line width=1.2pt] (x3) -- (y3);
            \draw[directed, color1, line width=1.2pt] (x3) -- (y4);
            \draw[directed] (y1) -- (b1);
            \draw[directed] (y2) -- (b1);
            \draw[directed] (y3) -- (b1);
            \draw[directed] (y4) -- (b1);

            \draw[dashed, rounded corners=1ex] (1.5, 0) rectangle (2.2, 2.8);
            \draw[dashed, rounded corners=1ex] (3.4, 0) rectangle (4.2, 2.8);
            \node[align=left, anchor=west] at (2.2, 2.8) {$A$};
            \node[align=left, anchor=west] at (4.2, 2.8) {$T$};
            \node[align=center, anchor=center] at (2.8, 0.2) {$R$};
        \end{tikzpicture}
    }
    \captionsetup{hypcap=false}
    
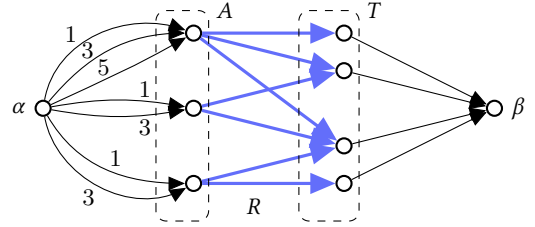
\captionof{figure}{%
    Transformation of \PrbSBAShort into a flow problem.
    The binary relation $R$ between agents $A$ and tasks $T$ is shown as blue edges.
    Every edge has unit capacity; non-zero edge costs are shown as numbers.}
    \label{fig:flow-conversion}
    \vspace*{1em}
\end{minipage}


We now prove Theorem~\ref{thm:runtime-bnp}, which states that BNP is tractable when
 the landmarks form a dominating set of $G$.

\begin{proof}[Proof of Theorem~\ref{thm:runtime-bnp}]
    Since non-landmarks can be assigned to any neighboring landmark, the problem is
    equivalent to \PrbSBAShort with agents $L$, tasks $V$, and relation
    $\{(u,u): u \in L\} \cup \{(u,v): u \in L, v \in N(u) \setminus L \}$.
    By Theorem~\ref{thm:runtime-bap}, the running time is\\ 
    $\bigo{|L|^2 |V|^2 + |L| |V|^3} = \bigo{|L|n^3}$.
\end{proof}



\subsection{\AlgRatio Approximation Guarantee}

In the interest of space, we sketch here only the main ideas of the proof of
Theorem~\ref{thm:approx-ratio-greedy}, which guarantees an $\mathcal{O}(\sqrt{\Delta^r})$-approximation.

\begin{proof}[Proof Sketch of Theorem~\ref{thm:approx-ratio-greedy}]
    First, observe that \AlgRatio behaves in the same way on $(G^r, 1)$ as on
    $(G, r)$. Since $\Delta(G^r) \leq \left(\Delta(G)\right)^r$, it
    suffices to consider $r=1$.

    Consider a partial dominating set $D \subseteq V$ and chosen vertex $v \in V$
    during an iteration of \AlgRatio. Let us call $t=\abs{N[D]}/\abs{V}$ the global
    domination ratio, and $\rho(v)=\abs{N[v] \cap N[D]}/\abs{N[v]}$ the local domination ratio.
    By definition, $v$ has the minimum local domination ratio among all vertices at this
    point.

    A key idea is to consider the minimum possible $t$ for the given $\rho$.
    Special cases lie in $0 \leq t \leq \frac{2}{\Delta+2}$ and $\frac{\Delta+1}{\Delta+2} \leq t \leq 1$,
    but they have constant effects on the approximation ratio, so we omit them here.

    Now, we may assume $0 < \rho < 1$. Let us partition $V$ into three parts: $D$, $U:=N[D] \setminus D$ and $W:=V\setminus N[D]$.
    Then, every vertex in $U$ may have at most $(1 -\rho)(\Delta + 1)$ neighbors in $W$,
    and every vertex in $W$ must have at least $\rho / (1 - \rho)$ neighbors in $U$,
    which leads to the relation: $|W|/|U| \leq (\Delta+1)(1-\rho)^2/\rho$.
    Consequently, we get: $t \geq f(\rho)$ and
    $\rho \leq 1 - f^{-1}(t)$, where $f(x) = 1/(1+\Delta(1-x)^2/x)$.

    Next, we relate $\avgcong(D)$ to $t$ by the fact
     that every time we add a vertex $v$ to $D$,
     $\avgcong(D)$ increases by $(\deg(v)+1)/n$ and
     $t$ increases by $(1-\rho(v))(\deg(v)+1)/n$.
    Finally, we obtain: $\avgcong_G \leq \bigo{1} + \int_{2/(\Delta+2)}^{(\Delta+1)/(\Delta+2)} \left(1/f^{-1}(t)\right)dt$,
    which evaluates to $\frac{\pi}{2}\sqrt{\Delta} + \bigo{1}$.
    Since $1 \leq \mac(G) \leq \avgcong(D)$ for all $G$, the approximation ratio is bounded by
    $\bigo{\sqrt{\Delta}}$ when $r=1$, and $\bigo{\sqrt{\Delta^r}}$ for general $r$.
\end{proof}

\subsection{Algorithm Details}
\label{sec:alg-details}

First, we give the running time of the algorithms from Section~\ref{sec:sparse-domset}.

\begin{theorem}
    \label{thm:dom-alg}
    \AlgDegree, \AlgDegreePlus, \AlgRatio, and \AlgRatioPlus find an $r$-dominating
    set in time $\bigo{\Delta^{2r}n \log n}$.
\end{theorem}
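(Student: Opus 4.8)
The plan is to give a single implementation that serves all four algorithms --- they differ only in the key used to pick the next vertex, not in the surrounding machinery --- and to bound its cost.

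First I would set up the data structures. As noted in the proof sketch of Theorem~\ref{thm:approx-ratio-greedy}, each of these algorithms depends only on the sets $N^r[v]$, their sizes, and their intersections with $N^r[D]$ for the current partial solution $D$ (equivalently, each runs on $(G,r)$ exactly as it does on $(G^r,1)$). So we precompute, by a depth-$r$ BFS from every vertex, the list $N^r[v]$ and the value $|N^r[v]|$ for all $v$; since $|N^r[v]|=\bigo{\Delta^r}$ and such a BFS scans $\bigo{\Delta^{r+1}}$ edge-endpoints, this costs $\bigo{n\Delta^{r+1}}=\bigo{n\Delta^{2r}}$ (using $r\ge 1$). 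We then keep a Boolean array recording which vertices are currently dominated and, for every $v$, a counter $c(v)=|N^r[v]\setminus N^r[D]|$. Each algorithm repeatedly selects the vertex maximizing a key that is a fixed, $\bigo{1}$-computable function of $c(v)$ and $|N^r[v]|$: \AlgDegree uses $c(v)$, \AlgRatio uses $c(v)/|N^r[v]|$, and \AlgDegreePlus, \AlgRatioPlus use the corresponding lexicographic pairs, with any remaining ties broken at random. All vertices sit in a max-priority queue on this key, supporting \texttt{extract-max} and \texttt{decrease-key} in $\bigo{\log n}$.

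Next I would bound the iteration count and the per-iteration cost. Whenever $D$ is not yet dominating, some vertex has $c(v)\ge 1$: if $c(v)=0$ for every $v$ then $v\in N^r[v]\subseteq N^r[D]$ for all $v$, i.e. $N^r[D]=V$. Hence the chosen vertex always newly dominates at least one vertex, and since the total number of newly-dominated vertices over the whole run is $n$, there are at most $n$ iterations. In an iteration we do one \texttt{extract-max} to obtain $u$; then, for each $w\in N^r[u]$ that is not already dominated, we mark $w$ dominated and, for each $v\in N^r[w]$, decrement $c(v)$ and call \texttt{decrease-key} on $v$ (recomputing its key from the stored $c(v)$ and $|N^r[v]|$). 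The number of pairs $(w,v)$ handled is at most $|N^r[u]|\cdot\max_{w}|N^r[w]|=\bigo{\Delta^{2r}}$, each in $\bigo{\log n}$, so an iteration costs $\bigo{\Delta^{2r}\log n}$. Over $\le n$ iterations this is $\bigo{n\Delta^{2r}\log n}$, which dominates the $\bigo{n\Delta^{2r}}$ preprocessing and gives the claimed bound.

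The one point that needs care --- and the step I would dwell on --- is the locality of updates: re-scoring all $n$ vertices after each addition would cost $\Theta(n\Delta^r)$ per iteration, so one must observe that adding $u$ changes $c(v)$ only for $v\in N^{2r}[u]$, that the denominators $|N^r[v]|$ in \AlgRatio and \AlgRatioPlus never change (so only numerators move and \texttt{decrease-key} is the correct primitive), and that the ``not already dominated'' test on each $w\in N^r[u]$ is needed to avoid double-decrementing $c(v)$ when $w$ was dominated by an earlier choice. These observations, together with the $\le n$ bound on iterations, are routine but essential; a finer accounting can even sharpen the main loop to $\bigo{n\Delta^{r}\log n}$, but the stated bound suffices for our purposes. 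Full implementation details appear in~\cite{codebase}.
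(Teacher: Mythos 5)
Your proposal is correct and follows essentially the same route as the paper's proof: a max-heap keyed on the greedy criterion, at most $n$ iterations, and the observation that adding $u$ only affects the keys of vertices in $N^{2r}[u]$, giving $\bigo{\Delta^{2r}}$ heap updates of cost $\bigo{\log n}$ each. You simply spell out the bookkeeping (precomputing $N^r[v]$, the counters $c(v)$, the static denominators, and the guard against double-decrementing) in more detail than the paper does.
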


\begin{proof}[Proof of Theorem~\ref{thm:dom-alg}]
    All of these algorithms continue to \linebreak choose vertices until the entire
     graph becomes dominated, so the resulting set is an $r$-dominating set.

    The running time is also the same among these algorithms.
    Consider a max heap of $n$ elements, which
    stores values of the greedy criteria for each $v \in V$. We can find and remove the maximum
    element in $\bigo{\log n}$ time. After adding a vertex $v$ to a partial dominating set
    $D$, we need to update at most $N^{2r}[v]$ elements in the heap because
    all criteria are based on the number of undominated vertices in the $r$-neighborhood.
    There are at most $n$ iterations, and since $\abs{N^{2r}[v]} \leq \Delta^{2r}$,
     the total running time is $\bigo{\Delta^{2r}n \log n}$.
\end{proof}

Next, we show that the (compact) neighborhood kernel can be found in linear time.

\begin{algorithm}[!h]
    \DontPrintSemicolon
    \KwIn{Graph $G=(V,E)$, landmarks $L\in V$}
    \KwOut{Neighborhood kernel $H$}

    $E' \gets \emptyset$

    \For {$e \in E$}{
        $vw \gets e$ such that $d(v,L) \leq d(w,L)$

        \If {$d(v,L) + 1 = d(w,L)$}{\label{alg:line:nbr-kernel-cond}
            $E' \gets E' \cup \{vw\}$
        }
    }
    \KwRet{$H=(V,E')$}

    \caption[]{\hspace*{-4.3pt}{.} \AlgNbrKnl ($G, L$)}
    \label{alg:nbr-kernel}
\end{algorithm}

\begin{algorithm}[!h]
    \DontPrintSemicolon
    \KwIn{Graph $G=(V,E)$, landmarks $L\in V$}
    \KwOut{Compact neighborhood kernel $(H_c, \phi)$}
    $H \gets $ \AlgNbrKnl$(G, L)$\\
    Initialize $\psi: V \rightarrow V$ to the identity map

    \For {$v \in V$ sorted by ascending $d_G(v,L)$}{
        \If {$\deg_H^-(v) = 1$}{
            \label{alg:line:compact-nbr-kernel-cond}
            \textbf{let} $w \in N_H^-(v)$ \tcp*{only one in-neighbor}
            \textbf{set} $\psi(v) = w$ \tcp*{$v$ belongs to bag $w$}
            $H \gets H / vw$ \tcp*{contract edge $vw$}
        }
    }

    \KwRet{$(H, \psi^{-1})$}

    \caption[]{\hspace*{-4.3pt}{.} \AlgCmpNbrKnl ($G, L$)}
    \label{alg:compact-nbr-kernel}
\end{algorithm}

\begin{theorem}
    \AlgNbrKnl (\AlgCmpNbrKnl) correctly \linebreak gives the (compact) neighborhood kernel of the graph
    on the given landmarks in $\bigo{n+m}$ time.
\end{theorem}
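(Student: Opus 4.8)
The plan is to argue correctness and running time separately for each algorithm, relying on the structural definitions (Definition~\ref{def:nbr-knl} and Definition~\ref{def:compact-nbr-knl}) and the BFS-layer perspective introduced in the preprocessing discussion. First I would establish that one BFS pass from $L$ (treating $L$ as a super-source) computes $d_G(v,L)$ for every $v \in V$ in $\bigo{n+m}$ time; this is the standard multi-source BFS and it is the common prerequisite for both algorithms. Then, for \AlgNbrKnl, correctness is essentially immediate: the algorithm iterates over each edge $e = vw \in E$ once, orients it so that $d(v,L) \le d(w,L)$, and keeps it precisely when $d(v,L) + 1 = d(w,L)$, which is exactly the edge set $E'$ in Definition~\ref{def:nbr-knl}. (Edges with $d(v,L) = d(w,L)$ are discarded, as are any edges incident to unreachable vertices, which cannot exist since $L$ is a landmark set and every vertex is reachable.) Since each edge is processed in $\bigo{1}$ time given the precomputed distances, the total time is $\bigo{n+m}$.

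For \AlgCmpNbrKnl, I would argue correctness by induction on the distance layers, processing vertices in order of increasing $d_G(v,L)$ as the algorithm does. The loop invariant is: after processing all vertices in layers $0,1,\dots,i$, the current digraph $H$ together with $\psi$ correctly represents the compact neighborhood kernel restricted to those layers, i.e. every contracted (non-representative) vertex has been forced into its bag because it had a unique in-neighbor at the time of processing, and every surviving vertex either is a landmark, or has $\ge 2$ in-neighbors in the current $H$. The key observation justifying the contraction step is that in a neighborhood kernel every shortest $v$--$L$ path uses one of $v$'s in-neighbors, so if $v$ has exactly one in-neighbor $w$ then $v$ and $w$ must lie in the same piece in any valid neighborhood partitioning (condition 2 of Definition~\ref{def:compact-nbr-knl}); conversely, when $v$ has two or more in-neighbors those in-neighbors can genuinely belong to different pieces, so $v$ cannot be forced and must remain a representative. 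Processing in ascending-distance order guarantees that when we reach $v$, all contractions among strictly-closer vertices have already happened, so the in-degree $\deg_H^-(v)$ we test reflects the final bag structure; this is what makes the maximality condition (condition 3) hold. I would also note that the representative chosen for a bag is always the vertex of minimum $d_G(\cdot,L)$ in the bag, since contraction always merges $v$ into its strictly-closer in-neighbor $w$ (condition 1). A small technical point to address: after a contraction $H \gets H/vw$, the vertex $w$ may acquire new in-edges (those that pointed to $v$); these are from layer $d(v,L)-1 = d(w,L)-1$, so they were already present as in-edges of $v$, hence $w$'s in-degree can only change for vertices processed earlier — but $w$ itself was processed earlier, so we never need to re-examine it. Actually the cleaner way to see this is that contraction merges $v$'s in-neighborhood into $w$'s, and both have the same distance, so no vertex processed later is affected in a way that invalidates the invariant; I would spell this out carefully as it is the subtle part.

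For the running time of \AlgCmpNbrKnl, I would observe that building $H$ costs $\bigo{n+m}$, sorting vertices by $d_G(v,L)$ costs $\bigo{n}$ via counting sort (distances are bounded by $n$), and the main loop touches each vertex once; each edge contraction and each in-degree check can be amortized to $\bigo{1}$ using an adjacency-list representation where contracting $vw$ splices $v$'s in-list into $w$'s, so that the total work over all contractions is bounded by the number of edges, giving $\bigo{n+m}$ overall. The main obstacle I anticipate is the bookkeeping for the amortized linear-time contraction: one must be careful that in-degrees are maintained correctly under a union-find-like merge and that no vertex's in-list is scanned more than a constant number of times; getting this amortization argument airtight (rather than hand-waving ``contraction is cheap'') is the part that needs the most care, though it is a standard technique.
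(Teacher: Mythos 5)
Your proposal follows essentially the same route as the paper's proof: a multi-source BFS from $L$ for the distances, a per-edge distance check to realize the kernel definition for \AlgNbrKnl, and processing vertices in BFS order for \AlgCmpNbrKnl so that representatives are closest to $L$ and maximality of the bags follows from the contraction order; the paper is merely terser about the amortized cost of the contractions. One minor slip in your technical aside: the unique in-neighbor $w$ of $v$ satisfies $d(w,L)=d(v,L)-1$ rather than having the same distance, and contracting $vw$ splices $v$'s \emph{out}-edges into $w$ (the only in-edge of $v$ is from $w$ itself), but this does not affect the validity of your main argument.
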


\begin{proof}
   We begin with \AlgNbrKnl, where the check in Line~\ref{alg:line:nbr-kernel-cond} enforces
     Definition~\ref{def:nbr-knl}, ensuring correctness.
     For every vertex $v \in V$, the distance from the closest landmark
     can be computed by BFS from $L$ in $\bigo{n+m}$ time. The algorithm checks all edges in
     $\bigo{m}$, resulting in  $\bigo{n+2m}=\bigo{n+m}$.

    Turning to \AlgCmpNbrKnl, we traverse all vertices in BFS order from the
    landmarks, which guarantees each bag's representative to be the closest to $L$. By the
    definition of \PrbBNPShort, Line~\ref{alg:line:compact-nbr-kernel-cond} ensures that all
    bag members must be assigned to the same landmark because a vertex must follow its
    in-neighbor's assignment.
    Lastly, $\phi(v)$ is maximal for every $v \in V_c$ because otherwise,
     there exists a vertex $x \in V_c$ with at least two in-neighbors $y,z$ such that
     $v = \psi(y)=\psi(z) \neq \psi (x)$. This cannot happen because the algorithm visits
     $y,z$ before $x$ and contracts $y,z$ into one vertex in $H_c$.
    This modified BFS does not add any extra asymptotic running time
    to \AlgNbrKnl. Thus, the total running time is $\bigo{n + m}$.
\end{proof}

The following lemma is used to show that \AlgPrtBranch is FPT (Theorem~\ref{thm:bnp-fpt}).

\begin{lemma}\label{lem:compact-nbr-knl-in-degree}
    Given a graph $G=(V,E)$ and landmarks $L \subseteq V$ with neighborhood kernel
    $H=(V,E')$ and compact neighborhood kernel $(H_c=(V_c,E_c), \phi)$,
    $\deg_{H_c}^-(v) \leq \deg_H^-(v) \leq \Delta(G)$ for every $v \in V_c$.
\end{lemma}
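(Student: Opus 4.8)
The plan is to prove the two inequalities independently. The right-hand inequality is essentially by definition: by Definition~\ref{def:nbr-knl} the in-neighbors of $v$ in $H$ are precisely the vertices $u$ with $uv\in E$ and $d(u,L)+1=d(v,L)$, so they form a subset of $N_G(v)$, whence $\deg_H^-(v)\le\deg_G(v)\le\Delta(G)$ (this reads $0\le\Delta(G)$ when $v\in L$).

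For $\deg_{H_c}^-(v)\le\deg_H^-(v)$ I would work with the constructive description of $H_c$ supplied by \AlgCmpNbrKnl: the compact kernel is obtained from $H$ by a sequence of digraph edge-contractions $H=H_0,H_1,\dots,H_t=H_c$, where $H_{i+1}$ results from $H_i$ by contracting an arc $(w_i,x_i)$ for which $x_i$ has in-degree exactly $1$ in $H_i$, the retained vertex being $w_i$ (loops created by a contraction are discarded). I would then maintain the invariant that for every $j$ and every vertex $u$ of $H_j$ we have $\deg_{H_j}^-(u)\le\deg_H^-(u)$, where $u$ denotes the same vertex of $V$ on both sides. The base case $H_0=H$ is an equality, and applying the invariant to $H_t=H_c$ gives the claim, since each $v\in V_c$ is exactly a vertex name that survived every contraction.

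The work is in the inductive step. When $(w_i,x_i)$ is contracted, the only in-arc of $x_i$ in $H_i$ is $(w_i,x_i)$, so after identifying $x_i$ with $w_i$ that arc becomes a discarded loop and every in-arc of the merged vertex is one that $w_i$ already had; hence $\deg_{H_{i+1}}^-(w_i)\le\deg_{H_i}^-(w_i)\le\deg_H^-(w_i)$. For any other vertex $u$ of $H_{i+1}$, merging two vertices of $H_i$ into one can only decrease or leave unchanged the number of arcs entering $u$, so $\deg_{H_{i+1}}^-(u)\le\deg_{H_i}^-(u)\le\deg_H^-(u)$. The main (and fairly minor) obstacle is just being careful about two bookkeeping points: using the in-degree-$1$ condition to guarantee that nothing ``new'' persists as an in-arc at the merged vertex, and fixing the convention that self-loops are removed so that the contracted arc cannot re-enter $w_i$. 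Once these are pinned down the invariant is preserved and the lemma follows; notably, no facts about the distance layering of the kernel are needed for this direction.
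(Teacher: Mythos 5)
Your proof is correct and follows essentially the same route as the paper's, which simply observes that $\deg_H^-(v)\le\deg_G(v)\le\Delta(G)$ and that \AlgCmpNbrKnl only performs contractions. In fact your write-up is more careful than the paper's two-line argument: you correctly note that an arbitrary contraction could increase the in-degree of the merged vertex, and that the in-degree-$1$ condition in Line~\ref{alg:line:compact-nbr-kernel-cond} is what rules this out.
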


\begin{proof}
    It is clear to see $\deg_H^-(v) \leq \deg_G(v) \leq \Delta(G)$.
    Also, \AlgCmpNbrKnl performs only contractions on $H$, so\linebreak
    $\deg_{H_c}^-(v) \leq \deg_H^-(v)$.
\end{proof}

\begin{algorithm}[!h]
    \DontPrintSemicolon
    \KwIn{Graph $G=(V,E)$, landmarks $L=\{u_1,\ldots,u_{\ell}\} \subseteq V$}
    \KwOut{Neighborhood partitioning $\mathcal{A}$}

    $H \gets $ \AlgNbrKnl$(G)$\\
    Partition $V$ into layers $\{V_i=\{v \in V: d(v,L)=i\}\}$\\
    Initialize map $f: V \to L$ to $f(u)=u$ for all $u \in L$\\

    \For {$i \gets 1$ \KwTo $\max_{v \in V} d(v,L)$} {
        \tcp{initialize relations for assigned vertices}
        $R \gets \{(f(x),x) \mid x \in \bigcup_{j=0}^{i-1} V_j\}$\\
        $R \gets R \cup \{(f(w), v) \mid v \in V_i, w \in N_H^-(v)\}$

        update $f$ with SBAP($L$, $\bigcup_{j=0}^i V_j$, $R$) \label{alg:line:prt-layer-flow}
    }

    \KwRet{$f^{-1}$ \tcp* {map from $L$ to assigned vertices}}

    \caption[]{\hspace*{-4.3pt}{.} \AlgPrtLayer ($G, L$)}
    \label{alg:prt-layer}
\end{algorithm}

\section{Experimental Setup}
\label{sec:exp-setup}


Datasets, along with a summary of their invariants, are available at~\cite{codebase}.
Metagenomic cDBGs were generated by BCALM 2 (v2.2.1)\cite{10.1093/bioinformatics/btw279}
with k-size 31. Some non-metagenome networks originated from the Network Data Repository \cite{nr}.
For disconnected graphs, we used the largest connected component.


We ran all experiments on identical hardware, equipped with 40 CPUs
(Intel(R) Xeon(R) Gold 6230 CPU @ 2.10GHz) and 190 GB of memory,
and running CentOS Linux release 7.9.2009. \sgc is written in Python 3.
We used Gurobi Optimizer 9.1.2 for ILP and QP. We used Plotly for creating charts.
Our algorithms are written in C++ with OpenMP and compiled with gcc 10.2.0.


Experimental results are fully replicable using the code and data
at~\cite{codebase}; detailed instructions in \texttt{README.md}.

\end{document}